\documentclass[12pt,a4paper]{article}
\usepackage[left=2cm,right=2cm,top=2.5cm,bottom=2.5cm]{geometry}
\usepackage{enumitem}
\usepackage{lastpage}
\usepackage{eurosym}
\usepackage{hyperref}
\usepackage{fancyhdr}

\usepackage{algorithmicx}
\usepackage{algorithm}
\usepackage{algpseudocode}

%\pagestyle{fancy}
%\fancyhf{} %to clear default
%\chead{Jaman, Ajmery}
%\cfoot{Research project}
\rfoot{\thepage}

\usepackage{color}

\usepackage[utf8]{inputenc}
\usepackage[english]{babel}
\usepackage{amsmath,mdframed}
\usepackage{mathtools}
\usepackage{amsfonts}
\usepackage{amssymb}
\usepackage{multirow}
\usepackage[dvipsnames]{xcolor}
\usepackage{tabularx}
\usepackage{setspace}

\usepackage{bm}
\usepackage{float}
\usepackage{graphicx}
\usepackage{caption}
\usepackage{subcaption}
\usepackage{comment}
\usepackage{array}
\usepackage{booktabs}
\newcolumntype{x}[1]{>{\centering}p{#1}}
\usepackage{natbib}
\usepackage{rotating}
\usepackage{authblk}

\usepackage{amsthm}

\let\ltxcup\cup
\usepackage{mathabx}

\newtheorem{theorem}{Theorem}

\newtheorem{assumption}{Assumption}

\title{Valid post-selection inference for penalized G-estimation}
% Authors: Ajmery Jaman, Ashkan Ertefaie, Michèle Bally, Renée Lévesque, Robert Platt, and Mireille Schnitzer
\author[1,*]{Ajmery Jaman}
\author[2]{Ashkan Ertefaie}
\author[3,5]{Michèle Bally}
\author[4]{Renée Lévesque}
\author[1]{Robert W.\ Platt}
\author[5,1]{Mireille E.\ Schnitzer}

\affil[1]{{\small Department of Epidemiology, Biostatistics and Occupational Health, McGill University, Montreal, H3A 1G1, Canada}}
\affil[2]{{\small Department of Biostatistics, Epidemiology and Informatics, University of Pennsylvania, Philadelphia, PA19104, U.S.A.}}
\affil[3]{{\small Department of Pharmacy, Centre Hospital of University of Montreal, Montreal, H2X 0C1, Canada}}
\affil[4]{{\small Department of Medicine, University of Montreal, Montreal, H3T 1J4, Canada}}
\affil[5]{{\small Faculty of Pharmacy, University of Montreal, Montreal, H3C 3J7, Canada}}
\affil[*]{{\small Corresponding author: Ajmery Jaman, Email: ajmery.jaman@mail.mcgill.ca}}
\date{}

\begin{document}
\renewcommand{\vec}[1]{\boldsymbol{\mathbf{#1}}}
\maketitle
\begin{abstract}
\noindent
Understanding treatment effect heterogeneity is important for decision-making in medical and clinical practices, or handling various engineering and marketing challenges. When dealing with high-dimensional covariates or when the effect modifiers are not predefined and need to be discovered, data-adaptive selection approaches become essential. However, with data-driven model selection, the quantification of statistical uncertainty is complicated by post-selection inference due to difficulties in approximating the sampling distribution of the target estimator. Data-driven model selection tends to favor models with strong effect modifiers with an associated cost of inflated type I errors. %However, data-driven model selection complicates the quantification of statistical uncertainty in post-selection inference and makes it difficult to approximate the sampling distribution of the target estimator. Such model selection tends to favor models with strong effect modifiers with an associated cost of inflated type I errors.
Although several frameworks and methods for valid statistical inference have been proposed for ordinary least squares regression following data-driven model selection, fewer options exist for valid inference for effect modifier discovery in causal modeling contexts. In this article, we extend two different methods to develop valid inference for penalized G-estimation that investigates effect modification of proximal treatment effects within the structural nested mean model framework. We show the asymptotic validity of the proposed inferential methods. In our simulation study, the proposed methods effectively controlled the false coverage rates for the target parameters, while the naive inference based on the sandwich variance estimator resulted in false coverage rates higher than the nominal level. Our work is motivated by the study of hemodiafiltration for treating patients with end-stage renal disease at the Centre Hospitalier de l'Université de Montréal. We apply these methods to draw inference about the effect heterogeneity of dialysis facility on the repeated session-specific hemodiafiltration outcomes.\\

\noindent
%in alphabetical order
{\bf Keywords}: causal inference, decorrelated score, G-estimation, longitudinal data, one-step improved estimator, post-selection inference
\end{abstract}

\setstretch{1}
\section{Introduction}

Understanding treatment effect heterogeneity is important for decision-making in medical and clinical practices, or addressing various engineering and marketing challenges. When dealing with high-dimensional covariates or when effect modifiers are not predefined and need to be discovered, data-adaptive selection approaches become essential. However, with data-driven model selection, the quantification of statistical uncertainty is complicated by post-selection inference. Classical inference is built on a framework where all modeling decisions are made independently of the data from which inference is drawn. The asymptotic distribution of the parameter estimator is challenging to derive due to the non-negligible estimation bias and sparsity effects associated with the high dimensional parameters \citep{ning2017general}. If we perform data-driven effect modifier selection, we tend to favor models with strong effect modifiers with an associated cost of inflated type I errors \citep{zhao2022selective}. %This refers to the multiplicity problem in regression that gets more severe with increasing number of measured covariates. 
Data-driven selection procedures produce a model that is itself stochastic, and this model selection uncertainty is not accounted for by the classical inference theory.

In the last fifteen years, there have been several proposed frameworks and methods for valid statistical inference following data-driven model selection. \cite{tibshirani2016exact} and \cite{lee2016exact} developed frameworks for inference under forward stepwise regression, least angle regression, and the Least Absolute Shrinkage and Selection Operator (LASSO). These conditional approaches provide valid inference only for a few specific model selection methods, not for generic variable selection, and are dependent on distributional assumptions for the response. The Post-Selection Inference (PoSI) method, proposed by \cite{berk2013valid} and later generalized by \cite{bachoc2020uniformly}, % which relax the ``Gaussian homoscedastic model" assumption, 
provides inferential guarantees for arbitrary model selection approaches, including informal ones. The PoSI method ensures valid inference even when an incorrect model is selected, but this inferential procedure is computationally expensive \citep{kuchibhotla2020valid}. By addressing the limitations of the PoSI method and accommodating misspecification of the normal linear model, \cite{kuchibhotla2020valid} introduced the Universal Post Selection Inference (UPoSI) approach for OLS regression assuming either fixed or random covariates, which are referred to as fixed-design UPoSI and random-design UPoSI, respectively.  %, and discussed the method for independent as well as some types of dependent data. The computational cost of the UPoSI approach is proportional to the number of covariates, which makes it an attractive method for valid post selection inference. Moreover, this approach does not impose any assumption on the correct specification of the model being used, thus making it a ``model-robust" inferential procedure. Such inference procedures are valid regardless of the variable selection method such as formal, informal, \textit{post hoc}, fully or only partly specified.
The UPoSI approach is computationally efficient--its cost is proportional to the number of covariates--and does not require correct model specification, making it a ``model-robust" inferential procedure. Other advancements include the debiased or desparsifying method proposed by \cite{zhang2014confidence}, known as the low dimensional projection estimator (LDPE), which constructs confidence intervals for linear or generalized linear models with the Lasso penalty. \cite{ning2017general} proposed a decorrelated score test for inference in penalized M-estimation. Unlike the work in \cite{zhang2014confidence} which are tailored for individual models,  the decorrelated score method \citep{ning2017general} provides a general framework for high dimensional inference that can be used to infer the oracle parameter under misspecified models. Based on the idea of projected estimating equations \citep{zhang2014confidence}, \cite{xia2022statistical} developed an inference procedure for linear functionals of high-dimensional longitudinal data using generalized estimating equations (GEE). However, there are few contributions on valid inference for effect modifier discovery in causal modeling contexts. In one such work, \cite{zhao2022selective} adapted the approach of \cite{lee2016exact} and proposed a conditional selective inference procedure for effect modification estimated using the LASSO. In the context of multistage decision problems of dynamic treatment regimes, \cite{jones2022valid} extended the UPoSI approach \citep{kuchibhotla2020valid} to develop valid inference for robust Q-learning.  \cite{gao2025asymptotic} adapted the approach of \cite{ning2017general} to develop an asymptotic inference method for multistage stationary treatment policies in the presence of high dimensional feature variables.

In this paper, we contribute to this growing body of work by developing valid inferential methods for the recently proposed doubly-robust penalized G-estimation \citep{jaman2025penalized}. This method estimates the proximal effects of exposure with simultaneous data-adaptive selection of effect modifiers within a structural nested mean model (SNMM) framework, particularly for repeated outcomes. We present two different proposals for valid inference on effect modification of proximal treatment effects: one is an extension of the UPoSI approach \citep{kuchibhotla2020valid} and another is based on the decorrelated score test introduced by \cite{ning2017general}. %ESRD is the final, permanent stage of chronic kidney disease. In patients with ESRD, either kidney transplant or regular dialysis are necessary for survival. 
Our methodological development is motivated by a study of hemodiafiltration (HDF), a dialysis technique for treating patients with end-stage renal disease (ESRD). %HDF combines two processes \cite{ronco2007hemodiafiltration}: diffusion (where solute molecules passively move from the blood to the dialysis fluid) and convection (where larger molecules are cleared from the blood).
Hemodiafiltration is the standard treatment for ESRD patients at the University of Montreal Hospital Centre (CHUM) outpatient dialysis clinic and its affiliated ambulatory dialysis center (CED).  Hemodiafiltration cleans waste and excess fluids from the blood by combining diffusive clearance and convective removal of solutes \citep{ronco2007hemodiafiltration}. It involves the ultrafiltration of a large volume of plasma water, which requires substitution fluid to be administered to the patient to preserve fluid balance. %Convection volume is calculated as the sum of the substitution volume and the ultrafiltration volume \cite{marcelli2015high}.
Dialysis effectiveness is indicated by the convection volume attained during each session, which is calculated as the sum of the substitution volume and the ultrafiltration volume \citep{marcelli2015high}. 
%Results of the CONvective TRAnsport STudy \cite{grooteman2012effect} randomized controlled trial and a meta-analysis of individual patient-level data from randomized controlled trials suggest that the target convection volume should be at least 24 liters per session \cite{chapdelaine2015optimization}. Observations of consistently lower average convection volumes at the outpatient dialysis clinic located at the CHUM compared to those at the CED triggered an interest to investigate a potential effect of the dialysis facility (CHUM vs.\ CED) on the convection volume or to identify the group of patients for whom such an effect exists. Hospital records provide data that include sociodemographic information, diagnoses, medications, blood test results, and dialysis treatment parameters for all dialysis sessions of each patient between March 1st, 2017 and December 1st, 2021. Availability of the convection volume outcome of each successful dialysis session allows us to explore the impact of dialysis facility on the session-specific mean convection volume.
Using the data extracted from hospital records, \cite{jaman2025penalized} explored the effect heterogeneity of the dialysis facility (CHUM vs.\ CED) on the session-specific mean convection volumes applying penalized G-estimation. In this paper, we apply our proposed methods to provide valid post-selection inference for the effects estimated by \cite{jaman2025penalized}. The identification of effect modifiers in \cite{jaman2025penalized} is informed by the data itself. Without proper adjustment for post-selection, the uncertainty in the estimated effect modification could be underestimated, leading to overconfident or misleading conclusions. Hence, addressing post-selection inference in our real-data analysis is essential.

The rest of the paper is organized as follows. In Section~\ref{sec.methodology}, we briefly describe the penalized G-estimator and the two proposed inferential procedures. In Section~\ref{sec.simulation}, we evaluate and compare the finite sample performance of the inferential methods along with naive inference based on a sandwich estimator via a simulation study under misspecification of the  treatment-free model. We then apply these methods to infer the heterogenous effect of dialysis facility on the session-specific hemodiafiltration outcomes in Section~\ref{sec.application}. Finally, we present a discussion in Section~\ref{sec.discussion}.

\section{Methodology} \label{sec.methodology}

\subsection{Notation}

Following \cite{jaman2025penalized}, suppose that we have data from $J$ sequential hemodiafiltration sessions for $n$ different ESRD patients. At each session, we record the outcome, the treatment received, and pre-session covariates. We denote the observed continuous outcome for patient $i$ at session $j$ by $Y_{ij}$, the (binary) treatment received by $A_{ij}$, and the vector of covariates by $\vec L_{ij}$, for all $i=1,\dots,n, j=1,\dots,J$. Let $\vec H_{ij}$ % be a vector that contains a one (corresponding to the main effect of the exposure) and each potential EM of interest (or their functions) chosen from the 
represent the history at occassion $j$ that comprises covariate history $\Bar{\vec L}_{ij}=\{\vec L_{i1},\ldots \vec L_{ij}\}$, past exposures $\Bar{A}_{i(j-1)}=\{A_{i1},\ldots,A_{i(j-1)}\}$ and past outcomes $\Bar{Y}_{i(j-1)}=\{Y_{i1},\ldots,Y_{i(j-1)}\}$. Throughout we use the potential outcomes framework \citep{robins1989analysis}. We define $Y_{ij}(\Bar{a}_{j})$ as the potential outcome that would have been observed at occasion $j$ for patient $i$ if the treatment history $\Bar{A}_{ij}=\{A_{i1},\ldots,A_{ij}\}$ were set counterfactually to $\Bar{a}_j=\{a_1,\ldots,a_j\}$.

\subsection{Proximal effects of treatment}\label{sec.SNMM}

The proximal (short-term) effects of the exposure at measurement occassion $j$ can be modelled using a linear structural nested mean model (SNMM) as follows
\citep{robins1989analysis,vansteelandt2014structural}:
\begin{align} \label{snmm.proximal}
E\{Y_{ij}(\Bar{a}_{j-1}, a_j)-Y_{ij}(\Bar{a}_{j-1}, 0)|\vec H_{ij}=\vec h_{ij},\Bar{A}_{ij}=\Bar{a}_{j}\} = \gamma_j^*(a_{j},\vec h_{ij};\vec\psi), 
\end{align}
where $j=1,\ldots,J$, $\gamma_j^*(a_{j},\vec h_{ij};\vec\psi)$, referred to as the ``treatment blip", is a scalar-valued function defined in terms of parameter $\vec\psi$, $\vec h_{ij}$ represent the realized values for $\vec H_{ij}$, and $\vec\psi=(\psi_0,\psi_1,\ldots,\psi_{K-1})'$ is a K-dimensional vector of parameters. The difference described in (\ref{snmm.proximal}) 
shows the effect of treatment $a_j$ compared to the reference treatment 0 on the outcome at occasion $j$, given the history up to that point. The goal is to estimate the parameters $\vec\psi$ utilizing the observed data via G-estimation \citep{robins2008estimation,vansteelandt2014structural}. 
The core idea of this approach is to construct the $j$-th proximal blipped down outcome, $U_{ij}=Y_{ij} - \gamma_j^*({A}_{ij}, \vec H_{ij};\vec\psi)$, which is a transformation of the observed data such that
%\begin{align*}
%    E[U_{ij}|\vec H_{ij}=\vec h_{ij},A_{ij}=a_{j}]= E[Y_{ij}(\Bar{a}_{j-1}, 0)|\vec H_{ij}=\vec h_{ij},A_{ij}=a_{j}],
%\end{align*}
%i.e., 
it has the same mean as $Y_{ij}(\Bar{a}_{j-1}, 0)$, i.e., the potential outcome under the reference treatment level 0 at occasion $j$. Under the restriction that the blip parameters are the same across measurement occasions, we can parameterize the blip as a simple function of the history as follows \citep{vansteelandt2014structural, boruvka2018assessing}:
$\gamma_j^*(a_{j},\vec h_{ij};\vec\psi)=a_{j}\vec h_{ij}'\vec\psi$, where $\vec h_{ij}$ contains a one and potential confounders (or functions of these) chosen from the histories. Each component of $\vec\psi$ reflects the change in the treatment effect attributable to the corresponding covariate. To ensure consistent parameter estimation under this parametric approach, the blip model must be accurately specified as a function of the history.

\subsection{Effect modifier discovery via penalized estimating equations} \label{sec.penalizedG}
For the estimation of the SNMM parameters with simultaneous selection of effect modifiers, \cite{jaman2025penalized} proposed the penalized G-estimator by adding a nonconvex smoothly clipped absolute deviation (SCAD) penalty \citep{fan2001variable} in the efficient score function \citep{chakraborty2013statistical} of $\vec\psi$. Under the usual causal assumptions \ref{asmp1}-\ref{asmp3} (consistency, sequential ignorability and positivity mentioned in Appendix~\ref{A.assump}) for identifiability of the target parameter $\vec\psi$, \cite{jaman2025penalized} proposed the following penalized efficient score function
\begin{align} \label{pen.score}
    \vec S^{P}(\vec\psi)&= \sum_{i=1}^n \Big\{\frac{\partial\vec\gamma^*(\vec A_i,\vec H_i;\vec\psi)}{\partial\vec\psi'} - E\Big(\frac{\partial\vec\gamma^*(\vec A_i,\vec H_i;\vec\psi)}{\partial\vec\psi'}|\vec H_i\Big)\Big\}'\,Var(\vec U_i|\vec H_i)^{-1} \nonumber \\
    &{\hspace{1.3in}}\{\vec U_i - E(\vec U_i|\vec H_i)\}- n{\vec q}_{\lambda_n}(|\vec\psi|)\text{sign}(\vec\psi),
\end{align}
where $\vec A_i=(A_{i1},\ldots, A_{iJ})^\top$, $\vec H_i=(\vec H_{i1},\ldots,\vec H_{iJ})^\top$ is a $J\times K$ matrix representing the unit-wise history for the $i$-th subject, $\vec U_i = (U_{i1},\ldots,U_{iJ})^\top$, $\text{E}(\vec U_i|\vec H_i)= \vec H_i\vec\delta$ is the treatment-free model with $\vec\delta$ denoting its parameters, $\vec q_{\lambda_n}(|\vec\psi|)=(0, q_{\lambda_n}(|\psi_1|),\ldots,q_{\lambda_n}(|\psi_{K-1}|))'$, $q(.)$ indicates the first-derivative of the SCAD penalty, and $\lambda_n$ is the tuning parameter. \cite{jaman2025penalized} considered a working structure for $\text{Var}(\vec U_i|\vec H_i) = \vec Q_i^{1/2}\,\vec R_i(\rho)\,\vec Q_i^{1/2}$, where $\vec Q_i=\sigma^2\vec I_{(J)}$ and $\vec R_i(\rho)$ is the $J\times J$ matrix representing the correlations among the blipped down outcomes of a patient and is defined with respect to parameter $\rho$. %an idea similar to the generalized estimating equations approach\cite{liang1986longitudinal}.
Some technical aspects related to $\vec R_i(\rho)$ are briefly outlined in Appendix~\ref{A.sec.est.R} and further details regarding the estimation of this correlation matrix can be found in the works of \cite{jaman2016determinant} and \cite{sultana2023caution}. The penalized efficient score function \citep{jaman2025penalized} for $\vec\theta=(\vec\delta^\top,\vec\psi^\top)^\top$ is
\begin{align*}
  \vec S^{P}(\vec\theta) = \vec S^{\text{eff}}(\vec\theta) - n{\vec q}_{\lambda_n}(|\vec\theta|)\text{sign}(\vec\theta),
\end{align*}
where ${\vec q}_{\lambda_n}(|\vec\theta|) = (\vec 0',{\vec q}_{\lambda_n}(|\vec\psi'|))'$. The penalized estimates of $\vec\theta$ are obtained by solving the following equations:
\begin{align} \label{penalized.score.equation}
   \vec S^{P}(\vec\theta)=\vec 0. 
\end{align}
To solve the equations in (\ref{penalized.score.equation}), \cite{jaman2025penalized} proposed an iterative procedure that combines G-estimation with the minorization-maximization (MM) algorithm \citep{hunter2005variable} to handle the nonconvex penalty, and considered the doubly-robust information criterion \citep{bian2024variable, moodie2023variable} for tuning parameter selection. \cite{jaman2025penalized} established the asymptotic properties of the penalized G-estimator and verified the double-robustness property via simulations.

\subsection{The problem with post-selection inference}\label{sec.problem}
Different models carry different interpretations of the parameters and answer different questions. %In linear ordinary least squares (OLS) regression, the coefficient of a predictor in a particular model is the ``average difference in the response approximated by that model" for a unit difference in the predictor, at the fixed levels of all other adjuster covariates in that particular model. %will be beneficial if we give the age-income example here
%For example, consider the relationship of age with the purchase likelihood of a high-tech gadget (this example is taken from \cite{berk2013valid}). It may seem surprising if we see younger people have lower purchase likelihood. But, if we look at the effect of age after adjusting for the income, we may find that at fixed levels of income younger people indeed have higher purchase likelihood. This paradox is a result of the positive collinearity between age and income. Hence, the income-unadjusted effect of age is different from income-adjusted effect of age, and two coefficients answer two different questions. For explanation in the SNMM framework, 
Let $\vec h_{ij}(M)$ denote the vector of observed covariates at the $j$-th measurement occasion for subject $i$ corresponding to the blip submodel $M$ and let $\vec\psi_M$ denote the target parameter vector under submodel $M$. The target of estimation using $\widehat{\gamma}_{j.M} = a_{ij}\vec h_{ij}'(M)\widehat{\vec\psi}_M$ given the blip submodel $M$ is $\gamma_{j.M} = a_{ij}\vec h_{ij}'(M)\vec\psi_M$% (for ease of expression we used $\gamma_{j.M}$ instead of $\gamma_{j.M}(a_{ij},\vec h_{ij}(M);\vec\psi_M)$)
. Therefore,  we do not unbiasedly estimate the true $\gamma_{j}^*$, rather we estimate its approximation $\gamma_{j.M}$ with respect to submodel $M$. If the submodel $M$ is subject to any kind of model selection using the observed data $\mathcal{D}$, then we should express the selected model as $\widehat{M} = \widehat{M}(\mathcal{D})$, which is now random. The selected model could be different for another realization $\mathcal{D}^*$. 
The target vector of coefficients $\vec\psi_{\widehat{M}(\mathcal{D})}$ for selected model $\widehat{M}(\mathcal{D})$ is also random: a) $\vec\psi_{\widehat{M}(\mathcal{D})}$ may have a different dimension for different data, b) a particular covariate may or may not be present in $\widehat{M}(\mathcal{D})$, and c) for any covariate in $\widehat{M}(\mathcal{D})$, its coefficient value may depend on the set of other covariates in $\widehat{M}(\mathcal{D})$. 
So, the set of parameters for which inference is sought is also random.% Consequently, we need an inference approach that accounts for all types of randomness involved in the estimation.

The naive post-selection inference procedure neither takes into account the uncertainty associated with model selection nor the possibility of selecting an incorrect model. Although the regularized procedure discussed in Section~\ref{sec.penalizedG} showed good performance in identifying effect modifiers with consistent estimation of the target parameters \citep{jaman2025penalized}, such a regularization method may yield estimators with distributions that are difficult to approximate. \cite{jaman2025penalized} proved the desirable asymptotic properties of the penalized G-estimator and presented a sandwich formula for calculating its asymptotic variance. Such sandwich estimators are consistent even when the number of parameters tends to infinity \citep{fan2004nonconcave}. If we have infinitely large samples, naive inference based on the sandwich variance estimator is valid. However, in finite samples, the uncertainty in the selection of effect modifiers invalidates the post-selection inference based on this sandwich variance. In practice, sandwich estimator tends to underestimate the standard errors, and the derived normal confidence regions (CRs) often do not provide acceptable coverage in finite samples \citep{minnier2011perturbation}. We have also observed the same issue with the sandwich estimator in our context (see our simulation results).
    
    For illustration, we consider a simple linear regression example. Suppose the data generating model is
    \begin{equation*}
    Y=\delta_1 X_1 + \delta_2 X_2 + \epsilon, \quad \epsilon \sim N(0, \sigma^2).
    \end{equation*}
    If we pre-specify our model with both covariates, then the OLS estimator for $\delta_1$ has the desirable properties and the standard $t$-statistic based inference is valid. However, if we use the same data to select variables-- for example, by including $X_2$ only if its sample correlation with $Y$ exceeds some threshold value-- then the distribution of the OLS estimator $\widehat{\delta}_1$ depends on whether $X_2$ was included. Standard errors and $p$-values obtained under the data-driven model no longer reflect this additional variability. Even if the selection method (e.g., penalization) yields estimators with favorable asymptotic properties, in finite samples the selection step can have a substantial influence on the distribution of the estimator. Hence, developing a valid post-selection inference method for the penalized G-estimation is crucial.

\subsection{Our proposals for valid inference with penalized G-estimation}
In this section, we present two different proposals for our target of inference: i) an extension of the random design UPoSI approach \citep{kuchibhotla2020valid} to the context of effect modification analysis in SNMMs using penalized G-estimation, and ii) an inference method based on a one-step improved penalized G-estimator derived from a decorrelated score function, following an approach similar to that of \cite{ning2017general}. Both the UPoSI approach \citep{kuchibhotla2020valid} and the decorrelated score method \citep{ning2017general} were originally developed for OLS regression.

\subsubsection{UPoSI approach}
Let $v \in \mathbb{R}^q$ is a vector of dimension $q$, $v(j)$ denotes the $j$-th element of $v$, and for any square-symmetric matrix $B \in \mathbb{R}^{q\times q}$, $B(j,k)$ denotes the element of $B$ corresponding to the $j$-th row and $k$-th column. We express the $r$-norm of the vector $v$ as
\begin{align*}
    ||v||_r=\Bigg(\sum_{j=1}^q|v(j)|^r\Bigg)^{1/r} \text{ for } 1 \leq r < \infty
\end{align*}
and the largest element in $v$ as
\begin{align*}
    ||v||_\infty = \underset{1 \leq j \leq q}{\text{max}}|v(j)|.
\end{align*}
Similarly, $||B||_\infty$ denotes the largest element (in absolute value) of the matrix $B$. We will use the term ``submodel" to refer to a subset of covariates in the regression and denote it by $M \subseteq \{1, 2, \ldots, K\}$. We define the set of all nonempty models of size no larger than $k$ by
\begin{align*}
    \mathcal{M}_K(k) = \{M : M \subseteq \{1,2,\ldots, K\}, 1 \leq |M| \leq k\}, \text{ for } 1 \leq k \leq K
\end{align*}
where $|M|$ represents the cardinality of $M$. Note that $\mathcal{M}_K(k)$ is the power set of $\{1,\ldots,K\}$ excluding the empty set. %%We will always include exposure in the model.. explain or edit the model universe addressing this 
The matrix $\vec h_i$ of dimension $J\times K$ contains the observed values of adjuster variables for subject $i$, and let $\vec h_i(M)$ denote the submatrix of $\vec h_i$  corresponding to submodel $M$. We also define the following quantities:
\begin{align*}
    \widehat{\vec W}_n = \frac{1}{n}\sum_{i=1}^n
    \begin{bmatrix}
    \vec h_i^\top\\
    [\{\vec a_i-\widehat{E}(\vec A_i|\vec H_i)\}\circledast\vec h_i]^\top
    \end{bmatrix}\widehat{\vec V}_i^{-1} \begin{bmatrix}
    \vec h_i & \vec a_i\circledast\vec h_i
    \end{bmatrix}
\end{align*}
and
\begin{align*}
    \widehat{\vec G}_n = \frac{1}{n}\sum_{i=1}^n\begin{bmatrix}
    \vec h_i^\top\\
    [\{\vec a_i-\widehat{E}(\vec A_i|\vec H_i)\}\circledast\vec h_i]^\top
    \end{bmatrix}\widehat{\vec V}_i^{-1}\vec Y_i,
\end{align*}
where $\vec a_i\circledast\vec h_i = \text{diag}(\vec a_i)\vec h_i$ denotes the row-wise multiplication, i.e., each row of $\vec h_i$ is multiplied by the corresponding element of $\vec a_i$, $\vec V_i$ is the shorthand notation for $Var(\vec U_i|\vec H_i)$, and $\widehat{E}(\vec A_i|\vec H_i)$ is the $J$-dimensional vector of estimated propensity scores. Note that $\widehat{\vec W}_n$ is a $(2K\times 2K)$-dimensional matrix and $\widehat{\vec G}_n$ is a $(2K\times 1)$-dimensional vector, and these are defined in terms of the full model, which refers to $M$ where $|M| = K$. We denote the expected versions of these quantities by $\vec W_n$ and $\vec G_n$, and define the estimation errors of $\vec W_n$ and $\vec G_n$ as follows:
\begin{align*}
    D_n^{W} &= ||\widehat{\vec W}_n - \vec W_n||_\infty = \underset{M \in \mathcal{M}_K(2)}{\text{max}}||\widehat{\vec W}_n(M) - \vec W_n(M)||_\infty\\
    D_n^{G} &= ||\widehat{\vec G}_n - \vec G_n||_\infty = \underset{M \in \mathcal{M}_K(1)}{\text{max}}||\widehat{\vec G}_n(M) - \vec G_n(M)||_\infty,
\end{align*}
where $\mathcal{M}_K(2)$ and $\mathcal{M}_K(1)$ represent the sets of all models of sizes bounded by 2 and 1, respectively. Also note that $\widehat{\vec W}_n(M)$ is the submatrix of $\widehat{\vec W}_n$ and $\widehat{\vec G}_n(M)$ is the subvector of $\widehat{\vec G}_n$ corresponding to submodel $M$, which are defined as follows:
\begin{align} \label{eq.W_nM}
    \widehat{\vec W}_n(M) = \frac{1}{n}\sum_{i=1}^n
    \begin{bmatrix}
    \vec h_i^\top\\
    [\{\vec a_i-\widehat{E}(\vec A_i|\vec H_i)\}\circledast\vec h_i(M)]^\top
    \end{bmatrix}\widehat{\vec V}_i^{-1}\begin{bmatrix}
    \vec h_i & \vec a_i\circledast\vec h_i(M)
    \end{bmatrix}
\end{align}
and
\begin{align}
    \widehat{\vec G}_n(M) = \frac{1}{n}\sum_{i=1}^n\begin{bmatrix}
    \vec h_i^\top\\
    [\{\vec a_i-\widehat{E}(\vec A_i|\vec H_i)\}\circledast\vec h_i(M)]^\top
    \end{bmatrix}\widehat{\vec V}_i^{-1}\vec Y_i.
\end{align}
%
% Need to show the derivation of normal1 in article
The empirical and the expected versions of the unpenalized estimating equations  $\vec S^{eff}(\vec\theta)=\vec 0$ corresponding to the submodel $M$ can be written as
\begin{align}
    \widehat{\vec W}_n(M)\widehat{\vec\theta}_{n,M}&=\widehat{\vec G}_n(M)\quad\text{and}\label{normal1}\\
    \vec W_n(M)\vec\theta_{n,M}&=\vec G_n(M),\label{normal2}
\end{align}
where $\widehat{\vec\theta}_{n,M}$ denotes the G-estimator of $\vec\theta_M$, the target parameters under submodel $M$. We propose the following UPoSI confidence regions for the G-estimator $\widehat{\vec\theta}_{n,M}$ in the selected submodel $M$:
\begin{align} \label{uposi.region.asymp}
    \widehat{\mathcal{R}}_{n,M}^*:=\bigg\{\vec\theta \in \mathbb{R}^{|M|}:||\widehat{\vec W}_n(M)\{\widehat{\vec\theta}_{n,M}-\vec\theta\}||_\infty \leq C_n^{G}(\alpha)+ C_n^{W}(\alpha)||\widehat{\vec\theta}_{n,M}||_1\bigg\},
\end{align}
where $C_n^{G}(\alpha)$ and $C_n^{W}(\alpha)$ are the bivariate joint upper $\alpha$ quantiles of $D_n^{G}$ and $D_n^{W}$, by construction satisfying
\begin{align*}
    P\bigg(D_n^{G} \leq C_n^{G}(\alpha) \text{ and } D_n^{W} \leq C_n^{W}(\alpha)\bigg) \geq 1-\alpha.
\end{align*}
The regions in (\ref{uposi.region.asymp}) provide an asymptotic coverage guarantee. The region that provides a finite sample guarantee can be defined as% \cite{kuchibhotla2020valid}
\begin{align} \label{uposi.region.finite}
    \widehat{\mathcal{R}}_{n,M}:=\bigg\{\vec\theta \in \mathbb{R}^{|M|}:||\widehat{\vec W}_n(M)\{\widehat{\vec\theta}_{n,M}-\vec\theta\}||_\infty \leq C_n^{G}(\alpha)+ C_n^{W}(\alpha)||\vec\theta||_1\bigg\},
\end{align}
The regions in (\ref{uposi.region.finite}) can be obtained by doing simple mathematical operations on equations~\ref{normal1} and~\ref{normal2} (see Theorem~\ref{thm.equiv} and Theorem~\ref{thm.UPoSI.region.derivation} and the proofs in Appendix~\ref{A.tech.detail.uposi}). Since the regions in (\ref{uposi.region.finite}) are difficult to analyze in terms of shape and Lebesgue measure \cite{kuchibhotla2020valid}, we focus on constructing the regions in (\ref{uposi.region.asymp}) for the penalized G-estimator. The quantiles $C_n^{G}(\alpha)$ and $C_n^{W}(\alpha)$ are unknown, which must be estimated from the data using a bootstrap method. In our study, we use the multiplier bootstrap for estimating the joint quantiles, which is described in Appendix~\ref{A.sec.multp.boot}. Also, we can construct  coordinate-wise confidence intervals like the form shown in Appendix~\ref{A.sec.conf.int}. Asymptotic validity of the  UPoSI method is described in Theorem~\ref{thm.asymp.validity.UPoSI} and the proof is given in Appendix~\ref{A.tech.detail.uposi}.

\begin{theorem}[\bf Asymptotic validity of UPoSI]\label{thm.asymp.validity.UPoSI}
 Let $\lambda_{\text{min}}(\vec W_n(M))$ denote the minimum eigen value of the matrix $\vec W_n(M)$. For every $1 \leq k \leq K$ satisfying the assumption that the estimation error $D_n^{W}$ satisfies $kD_n^{W} = o_{\mathbb{P}}(\omega_n(k))$ as $n \rightarrow \infty$, where $\omega_n(k)=\text{min}_{M \in \mathcal{M}_K(k)}\lambda_{\text{min}}(\vec W_n(M))$, the confidence regions $\widehat{\mathcal{R}}_{n,M}^*$ in (\ref{uposi.region.asymp}) satisfy
 \begin{align*}
    \underset{n \rightarrow \infty}{\text{lim inf }}     P\Bigg(\bigcap_{M \in \mathcal{M}_K(k)}\{\vec\theta_{n,M} \in \widehat{\mathcal{R}}_{n,M}^*\}\Bigg) \geq 1 - \alpha.
 \end{align*}
\end{theorem}
For a specific correlation structure (corstr), the steps for the whole estimation procedure are summarized in Algorithm~\ref{UPoSI.algorithm}.

\begin{algorithm}
\caption{{\bf U}niversal {\bf Po}st-{\bf S}election {\bf I}nference for {\bf Pe}nalized {\bf G}-estimation}\label{UPoSI.algorithm}
\begin{algorithmic}[1]
\Procedure{UPoSIPeG}{$\vec A, \vec H, \vec Y, \widehat{\vec\theta}, \widehat{\sigma}, \widehat{\rho}$,\,corstr,\,$\alpha,\widehat{M}$}
    \State Compute $\text{E}(\vec A_i|\vec H_i)$ for $i=1,\ldots,n$;
    \State Compute $\widehat{\vec V}_i$ using $\widehat{\sigma}$ and $\widehat{\rho}$ according to the corstr for $i=1,\ldots,n$;
    \State Compute $\widehat{\vec W}_n(\widehat{M})$ following (\ref{eq.W_nM});
    \State Standardize the continuous variables in $\vec H$;
    \State \parbox[t]{0.92\linewidth}{Obtain the bivariate quantiles $\widehat{C}_n^G$ and $\widehat{C}_n^W$ following the multiplier bootstrap method described in Appendix~\ref{A.sec.multp.boot};}
    \State Using $\widehat{\vec\theta}=(\widehat{\vec\delta}, \widehat{\vec\psi})^\top$ define $\mathcal{B} = \{0\} \ltxcup \{m: m\in \{1,\ldots, K-1\} \text{ and } |\widehat{\psi}_m| \geq 0.001\}$;
    \For {each $k \in \mathcal{B}$}
        \State \parbox[t]{0.92\linewidth}{Construct the $(1-\alpha)\times 100\%$ confidence interval for the $k$-th coefficient in $\vec\psi$ as
        \begin{align*}
            \widehat{\psi}_k \pm \Big|\vec c_k'\big\{\widehat{\vec W}_n(\widehat{M})\big\}^{-1}\Big| \Big(\widehat{C}_n^G(\alpha) + \widehat{C}_n^W(\alpha)||\widehat{\vec\theta}||_1\Big),
        \end{align*}
        where $\vec c_k$ is a vector that contains 1 at the $k$-th position and zeros elsewhere.}
    \EndFor
    \State \textbf{return} the confidence intervals for the blip coefficients $\psi_k$, where $k \in \mathcal{B}$.
\EndProcedure
\end{algorithmic}
\end{algorithm}

\subsubsection{One-step improved penalized G-estimator%or Decorrelated score test
}
%Applying Taylor expansion of the score function, \cite{ning2017general} showed that the asymptotic normality of the score test statistic fails due to the non-ignorable estimation bias and sparsity effect of the high dimensional parameters in the higher order terms of the Taylor approximation. \cite{ning2017general} proposed debiasing the score function of the parameter of interest by de-correlating it from the high dimensional nuisance scores. (Does it provide simultaneous inference?)
For valid inference about the target parameter $\vec\psi$, we can derive a one-step improvement \citep{ning2017general,gao2025asymptotic} of the penalized-G estimator. We make a partition of the target parameter vector as $\vec\psi=(\psi_k,\vec\nu_k)$, where  $k$ can take any value in $\{0,1,\ldots, K-1\}$ and $\vec\nu_k = (\psi_0,\ldots,\psi_{k-1},\psi_{k+1},\ldots, \psi_{K-1})$. Let $\vec S(\vec\theta)= n^{-1}\vec S^{\text{eff}}(\vec\theta)$, where $\vec S^{\text{eff}}(\vec\theta)= \sum_{i=1}^n \vec S_i^{\text{eff}}(\vec\theta)$. Suppose $\vec S_{\vec\psi} = (S_{\psi_k}, \vec S_{\vec\nu_k}^\top)^\top$ denote the sub-vector of $\vec S(\vec\theta)$ corresponding to the parameters in $\vec\psi$ and $\vec I=E[\vec S_{\vec\psi}\vec S_{\vec\psi}^\top]$. We denote  the  submatrix of $\vec I$ corresponding to the parameters in $\vec\psi$ by $I_{\psi_k\psi_k}$, $\vec I_{\psi_k\vec\nu_k}$, $\vec I_{\vec\nu_k\vec\nu_k}$, $\vec I_{\vec\nu_k\psi_k}$ and define $I_{\psi_k|\vec\nu_k} = I_{\psi_k\psi_k} - \vec I_{\psi_k\vec\nu_k}\vec I_{\vec\nu_k\vec\nu_k}^{-1}\vec I_{\vec\nu_k\psi_k}$. A decorrelated score function can be defined as
\begin{align} \label{eq.decorr.score.fun}
    \ddot{S}(\psi_k,\vec\nu_k,\vec\delta) = S_{\psi_k} - \vec w^\top\vec S_{\vec\nu_k},
\end{align}
where $\vec w^\top=\vec I_{\psi_k\vec\nu_k}\vec I_{\vec\nu_k\vec\nu_k}^{-1}$. The score function $\ddot{S}(\psi_k,\vec\nu_k,\vec\delta)$ is uncorrelated with the nuisance score function $\vec S_{\vec\nu_k}$ in the sense that
\begin{align*}
    E[\ddot{S}(\psi_k,\vec\nu_k,\vec\delta)\vec S_{\vec\nu_k}] = \vec 0.
\end{align*}
The decorrelation operation controls the variability of higher order terms in the Taylor expansions of the score function $\vec S(\vec\theta)$. We need to impose some sparsity assumption on $\vec w$ to control the estimation error, i.e., we will find the estimator $\widehat{\vec w}$ of $\vec w$ that searches for the best sparse linear combination of the nuisance score functions to approximate the score function of the parameter of interest.
%Given the penalized G-estimator $\widehat{\vec\theta}=(\widehat{\vec\delta}^\top, \widehat{\vec\psi}^\top)^\top$, the estimator $\widehat{\vec w}$ of $\vec w$ can be obtained by any sparse estimation method, for example, LASSO, the Dantzig selector etc. %Let $\widehat{\vec v} = (1, -\widehat{\vec w}^\top)^\top$. 
We plug-in the estimates of the treatment-free model parameters, and estimate $\ddot{S}(\psi_k,\vec\nu_k,\widehat{\vec\delta})$ as follows:
\begin{align} \label{eq.decorr.score.est}
    \widehat{\ddot{S}}(\widehat{\psi}_k,\widehat{\vec\nu}_k,\widehat{\vec\delta}) = \widehat{S}_{\psi_k} - \widehat{\vec w}^\top\widehat{\vec S}_{\vec\nu_k},
\end{align}
which can be used for hypothesis testing.

The decorrelated score function can be regarded as an approximately unbiased estimating function for $\psi_k$ and an estimator of $\psi_k$ can be found by solving $\widehat{\ddot{S}}(\psi_k,\widehat{\vec\nu}_k,\widehat{\vec\delta}) = 0$. Since $\widehat{\ddot{S}}(\psi_k,\widehat{\vec\nu}_k,\widehat{\vec\delta})$ may have multiple roots, we can find an estimator by solving the first order approximation of $\widehat{\ddot{S}}(\psi_k,\widehat{\vec\nu}_k,\widehat{\vec\delta}) = 0$. Given the sparse estimator $\widehat{\vec\psi}$ and the estimated partial information
\begin{align}\label{part.inf}
    \widehat{I}_{\psi_k|\vec\nu_k} = \widehat{I}_{\psi_k\psi_k} - \widehat{\vec w}^\top\widehat{\vec I}_{\vec\nu_k\psi_k},
\end{align}
where $\widehat{I}_{\psi_k\psi_k} = \widehat{S}_{\psi_k}^2$ and $\widehat{\vec I}_{\vec\nu_k\psi_k} = \widehat{\vec S}_{\vec\nu_k}\widehat{S}_{\psi_k}$, we find the one-step improved penalized G-estimator $\widetilde{\psi}_k$ of $\psi_k$ by solving $\widehat{\ddot{S}}(\widehat{\psi}_k,\widehat{\vec\nu}_k,\widehat{\vec\delta}) + \widehat{I}_{\psi_k|\vec\nu_k}(\psi_k - \widehat{\psi}_k) = 0$ and the solution is as follows
\begin{align} \label{eq.os.est.part.inform}
    \widetilde{\psi}_k = \widehat{\psi}_k - \widehat{\ddot{S}}(\widehat{\psi}_k,\widehat{\vec\nu}_k,\widehat{\vec\delta})/\widehat{I}_{\psi_k|\vec\nu_k}.
\end{align}
We show the asymptotic normality of the decorrelated score function in Theorem~\ref{thm.asymp.normal.decorr.score} in Appendix~\ref{A.tech.detail.one.step}. The asymptotic normality of the one-step improved penalized G-estimator $\widetilde{\psi}_k$ is stated in Theorem~\ref{thm.asymp.normal.OS} and is proved using the result of Theorem~\ref{thm.asymp.normal.decorr.score} (see Appendix~\ref{A.tech.detail.one.step} for the proof).

\begin{theorem}[\bf Asymptotic normality of the one-step improved penalized G-estimator]\label{thm.asymp.normal.OS}
    Under the regularity conditions C1-C6 and the Assumptions \ref{asmp1}-\ref{asmp8} mentioned in Appendix~\ref{A.tech.detail.one.step}, if $\{\eta_1(n) + \eta_2(n)\}\sqrt{\log K} = o(1)$, $\widehat{I}_{\psi_k|\vec\nu_k}$ is consistent for $ I_{\psi_k|\vec\nu_k}^*$, and $I_{\psi_k|\vec\nu_k}^* \geq C$ for some constant $C > 0$, then
    \begin{align*}
        n^{1/2}(\widetilde{\psi}_k - \psi_k^*)I_{\psi_k|\vec\nu_k}^*/\sigma_S^{*1/2} = -\vec S_{\vec\psi^*}/\sigma_S^{*1/2} + o_{\mathbb{P}}(1) \sim N(0,1)
    \end{align*}
    for $k = 0, 1, \ldots, K-1$, where $\sigma_S^*$ is defined in Assumption~\ref{asmp8} in Appendix~\ref{A.tech.detail.one.step}.
\end{theorem}

Based on the results of Theorem~\ref{thm.asymp.normal.OS}, we can construct a $(1-\alpha)\times 100\%$ confidence interval of $\psi_k$ as
\begin{align} \label{eq.CI.os.est}
    \Big(\widetilde{\psi}_k - \Phi^{-1}(1-\alpha/2) \frac{\sqrt{\widehat{\sigma}_S}}{\sqrt{n}\widehat{I}_{\psi_k|\vec\nu_k}}, \widetilde{\psi}_k + \Phi^{-1}(1-\alpha/2) \frac{\sqrt{\widehat{\sigma}_S}}{\sqrt{n}\widehat{I}_{\psi_k|\vec\nu_k}}\Big),
\end{align}
where $\Phi$ denotes the cumulative distribution function (CDF) of standard normal distribution, and 
\begin{align} \label{eq.sigma_s}
    \widehat{\sigma}_S= (1, -\widehat{\vec w}^\top)\widehat{\vec S}_{\vec\psi}\widehat{\vec S}_{\vec\psi}^\top(1, -\widehat{\vec w}^\top)^\top.
\end{align}
For a specific correlation structure (corstr), the steps for the whole estimation procedure are summarized in Algorithm~\ref{OS.algorithm}.

\begin{algorithm}
\caption{Inference Using the {\bf O}ne-{\bf S}tep {\bf I}mproved {\bf Pe}nalized {\bf G}-estimator}\label{OS.algorithm}
\begin{algorithmic}[1]
\Procedure{OSIPeG}{$\vec A, \vec H, \vec Y, \widehat{\vec\theta}, \widehat{\sigma}, \widehat{\rho}$,\,corstr,\,$\alpha, \vec\lambda_{\text{seq}}$}
\State Standardize the continuous variables in $\vec H$;
\State Compute $\widehat{\vec V}_i$ using $\widehat{\sigma}$ and $\widehat{\rho}$ according to the corstr for $i=1,\ldots,n$;
\State $\vec e_i \gets \vec Y_i - (\vec H_i\;\;\vec A_i\circledast\vec H_i)\widehat{\vec\theta}$ for $i=1,\ldots,n$;
\State $\vec S_{\vec\psi} \gets n^{-1}\sum_{i=1}^n \vec S_{\vec\psi,i}$, where $\vec S_{\vec\psi,i} = [\{\vec A_i - E(\vec A_i|\vec H_i)\}\circledast\vec H_i]^\top\widehat{\vec V}_i^{-1}\vec e_i$;
\State $\vec I_{\vec\psi} \gets n^{-1}\sum_{i=1}^n \vec S_{\vec\psi,i}\vec S_{\vec\psi,i}^\top$;
\State Using $\widehat{\vec\theta}=(\widehat{\vec\delta}, \widehat{\vec\psi})^\top$ define $\mathcal{B} = \{0\} \ltxcup \{m: m\in \{1,\ldots, K-1\} \text{ and } |\widehat{\psi}_m| \geq 0.001\}$;
    \For{each $k \in \mathcal{B}$}
        \State Partition the target estimate $\widehat{\vec\psi}$ as $(\widehat{\psi}_k, \widehat{\vec\nu}_k)$;
        \For{each $\lambda_w \in \vec\lambda_{\text{seq}}$}
            \State Obtain the Dantzig type estimator $\widehat{\vec w}_{\lambda_w}$ according to (\ref{eq.dantzig});
        \EndFor
            \State Choose the optimal $\lambda_w^*$ using cross-validation and set $\widehat{\vec w} = \widehat{\vec w}_{\lambda_w^*}$;
            \State Compute the decorrelated score function $\widehat{\ddot{S}}(\widehat{\psi}_k,\widehat{\vec\nu}_k,\widehat{\vec\delta})$ using (\ref{eq.decorr.score.est});
            \State Compute $\widehat{I}_{\psi_k|\vec\nu_k}$ according to (\ref{part.inf});
            \State Calculate the one-step improved estimator $ \widetilde{\psi}_k$ using (\ref{eq.os.est.part.inform});
            \State \parbox[t]{0.92\linewidth}{Construct the $(1-\alpha)\times 100\%$ confidence interval for $\psi_k$ following (\ref{eq.CI.os.est});}
    \EndFor
    \State \textbf{return} $\widetilde{\vec\psi}$ and the confidence intervals for the selected coefficients.
\EndProcedure
\end{algorithmic}
\end{algorithm}

In our study, we evaluate the performance of the method with two different sparse weight estimators:
\begin{align}
 \text{\textbf{LASSO}: }   \widehat{\vec w} &= \underset{\vec w}{\text{argmin}} \frac{1}{2n}\lVert \widehat{S}_{\psi_k} - \vec w^\top \widehat{\vec S}_{\vec\nu_k}\rVert_2^2 + \lambda_{\vec w} \lVert\vec w\rVert_1\\
  \text{\textbf{Dantzig selector}: }  \widehat{\vec w} &= \underset{\vec w}{\text{argmin}} \lVert\vec w\rVert_1 \text{\hspace{0.5cm}s.t. }\lVert \widehat{\vec S}_{\vec\nu_k}^\top(\widehat{S}_{\psi_k} - \vec w^\top \widehat{\vec S}_{\vec\nu_k})\rVert_\infty \leq \lambda_{\vec w} \label{eq.dantzig}
\end{align}
where $\lambda_{\vec w}$ is the tuning parameter, which we choose by cross-validation. LASSO and the Dantzig selector both produce sparse weight estimates. While LASSO minimizes the residual sum of squares with an $L_1$-penalty on the weights, the Dantzig selector minimizes $L_1$-norm of the weights with a constraint on the maximum absolute correlation between residuals and nuisance scores. LASSO is computationally faster and performs well when nuisance scores are not highly correlated. On the other hand, the Dantzig selector is robust and more stable in scenarios where the nuisance scores are highly correlated, but it is computationally intensive.

\section{Simulation study} \label{sec.simulation}

We use a simulation setting similar to that of \cite{jaman2025penalized}. To generate the data for the $j$-th session ($j=1,\ldots,J$), % of each subject, 
we generated two baseline confounders as $L^{(1)} \sim N(0,1)$ and $L^{(2)} \sim N(0,1)$, and the time varying confounders and noise covariates as $L_j^{(3)},\ldots,L_j^{(6)},X_j^{(1)},\ldots,X_j^{(K-6)} \sim MVN_{K-2}\Big((\vec\mu_{L,j}^\top,\vec\mu_{X,j}^\top)^\top,\vec V_{LX}\Big)$, where $\mu_{L,j}^{(k)}= 0.3\,l^{(k)}_{j-1}+0.3\,a_{j-1}$ for $k=3,4, 5 \text{ and } 6$, and $\mu_{X,j}^{(r)}= 0.5\,x^{(r)}_{j-1}$ for $ r=1,\ldots,K-6$. The covariance matrix $\vec V_{LX}$ has $(r,s)$-th element equal to $\tau^{|r-s|}$ for $r,s=1,\ldots,K-2$. We generated the binary exposure according to the probability
\begin{align}
    \mathbb{P}(A_j = 1 | \vec H_j) = \dfrac{\exp{\{\beta_0+\beta_1 l^{(1)}+\beta_2 l^{(2)}+\sum_{m=3}^6 \beta_m l_j^{(m)}\}}}{1+\exp{\{\beta_0+\beta_1 l^{(1)}+\beta_2 l^{(2)}+\sum_{m=3}^6 \beta_m l_j^{(m)}\}}}.
\end{align}
We then generated a vector of correlated errors $\vec \epsilon \sim N_{J}(\vec 0, \vec \Sigma)$, where $\vec \Sigma=\sigma^2_\epsilon\vec R$ is the variance-covariance matrix and $\vec R$ is the $J\times J$ correlation matrix defined with parameter $\rho$ according to an ``exchangeable" correlation structure, i.e., $\text{Corr}(\epsilon_{ij}, \epsilon_{ij}) = 1$ and $\text{Corr}(\epsilon_{ij}, \epsilon_{ij'}) = \rho$ for $j\neq j'$. We constructed the outcome as $y_j = \mu_j(\vec h_j; \vec\delta) + \gamma^*_j(a_j,\vec h_j; \vec\psi) + \epsilon_j$, where
\begin{align*}
    \mu_j(\vec h_j; \vec\delta)&=\delta_0+\delta_1 l^{(1)}+\delta_2 l^{(2)}
    +\sum_{m=3}^6 \delta_m l_j^{(m)}
    +\sum_{m=1}^{20} \delta_{6+m}\,x_j^{(m)}
    +\sum_{m=21}^{K-6} \delta_{6+m}\,x_j^{(m)}\\
    &\hspace{0.2cm}
    +\delta_{K+1} l^{(1)}l_j^{(5)}
    +\delta_{K+2} l_j^{(3)}l_j^{(4)}
    +\delta_{K+3} \sin(l_j^{(3)} - l_j^{(4)})
    +\delta_{K+4} \cos(2l_j^{(5)})    
\end{align*}
is the true treatment-free model and 
$\gamma^*_j(a_j,\vec h_j; \vec\psi)=(\psi_0+\psi_1 l^{(1)}+\psi_2 l^{(2)}+\sum_{m=3}^6\psi_m l_j^{(m)}+\sum_{m=1}^{20} \psi_{6+m}\,x_j^{(m)}+\sum_{m=21}^{K-6} \psi_{6+m}\,x_j^{(m)})a_j$ is the true blip function, with common parameters at each time point.
Let $\vec\beta=(\beta_0,\ldots,\beta_6)^\top$, $\vec\delta=(\delta_0,\ldots,\delta_{K+4})^\top$ and $\vec\psi=(\psi_0,\ldots,\psi_{K})^\top$. We set
\begin{align*}
   \vec\beta&=(0,1,-1.1,1.2,0.75,-0.9,1.2)^\top \\
   \vec\delta&=(1,1, 1.2, 1.2, -0.9, 0.8, -1, 1, \ldots, 1, 0, \ldots,0,-0.8, 1, 1.2, -1.5)^\top\\
   \vec\psi&=(1, 1, -1, -0.9, 0.8, 1, 0, 0, \ldots,0, 0, \ldots,0)^\top
\end{align*}
Note that $X^{(1)}$ to $X^{(20)}$ have impact on the outcome only and the coefficients of $X^{(21)}$ to $X^{(K-6)}$ were set to zero in the treatment-free model $\mu_j(\vec h_j; \vec\delta)$. The outcome-predictor $X^{(10)}$ was treated as unmeasured. Though we set the coefficients of all the $X$'s to zero in the blip function $\gamma^*_j(a_j,\vec h_j; \vec\psi)$, we consider a scenario where there is interest in investigating effect heterogeneity by the $X$ variables in addition to the $L$ variables. We consider $K=$ 20, 50 and 100, $n=$ 500, 800 and 1200, $\tau=0.3$, $\sigma^2_\epsilon=1$, and $\rho=0.8$. 

The linear outcome model used in the penalized estimation is misspecified because it ignores the non-linear terms in the treatment-free model and excludes the outcome-predictor $X^{(10)}$.
%\begin{itemize}
%    \item $L^{(1)}\times L^{(4)}$ and $L^{(2)}\times L^{(3)}$ interactions are excluded,
%    \item $\sin(L^{(3)}-L^{(4)})$ and $\cos(2 L^{(5)})$ terms were ignored, and
%    \item Covariate $X^{(10)}$ which also affects the outcome was treated as unmeasured.
%\end{itemize}
%\noindent
We have five competing methods for our target inference; a) Naive: The naive inference based on the Wald-type confidence interval constructed using the sandwich variance of the penalized G-estimator, b) UPoSI: Inference following the random design UPoSI approach, c) OS.FULL: Inference with the one step improved penalized G-estimator, where improvement is done using the full weight vector, d) OS.LASSO: Inference with the one step improved penalized G-estimator, where improvement is done using the sparse weight vector estimated by LASSO, and e) OS.Dantzig: Inference with the one step improved penalized G-estimator, where improvement is done using the sparse weight vector obtained by the Dantzig selector. For comparing the performance of the competing inferential methods, we use the following metrics: a) average confidence interval (CI) length of the blip coefficients in the selected model, b) false coverage rate in the selected model, and c) conditional power for the true non-zero blip coefficients in the selected model. Since the naive inference and the UPoSI approach are intended to provide interval estimates only for the selected coefficients, we assessed the conditional power, which refers to the likelihood that the confidence interval excludes zero for a true non-zero coefficients, given that it is included in the selected model.
%Definition of the metrics
We calculated the average CI length as $E[\sum_{k=1}^{\text{dim}(\vec\psi^*_{\widehat{M}})}(UL_k - LL_k)/\text{dim}(\vec\psi^*_{\widehat{M}})]$, the false coverage rate (FCR) as
\begin{align*}
    \text{FCR} = E\Bigg[\frac{\#\big\{1\leq k \leq \text{dim}(\vec\psi^*_{\widehat{M}}):(\vec\psi^*_{\widehat{M}})_k \not\in [LL_k,UL_k]\big\}}{\text{dim}(\vec\psi^*_{\widehat{M}})}\Bigg]
\end{align*}
and the conditional power as
\begin{align*}
    \text{Power} = E\Bigg[\frac{\#\big\{1\leq k \leq d_{\psi^*\neq 0}: 0 \not\in [LL_k,UL_k]\big\}}{d_{\psi^*\neq 0}}\Bigg],
\end{align*}
where $\widehat{M}$ denotes the selected model, $\vec\psi^*_{\widehat{M}}$ denotes the sub-vector of true blip parameters $\vec\psi^*$ according to $\widehat{M}$, $LL_k$ and $UL_k$ denote the lower and the upper confidence limits, respectively, and $d_{\psi^*\neq 0}$ denotes the number of true non-zero values in $\vec\psi^*_{\widehat{M}}$. The methods were evaluated under three different correlation structures: independent, exchangeable and unstructured. For the independent structure we used $\text{Corr}(U_{ij},U_{ij}) = 1$ and  $\text{Corr}(U_{ij},U_{ij'}) = 0$ for $j\neq j'$; for exchangeable, we used $\text{Corr}(U_{ij},U_{ij'}) = \rho$;  and for unstructured, $\text{Corr}(U_{ij},U_{ij'}) = \rho_{jj'}$. These correlation parameters were estimated using the residual-based moment method (details can be found in the supplementary material in \cite{jaman2025penalized}). The performance metrics were calculated from 150 independent simulations.

\begin{figure}[t]
    \centering
    \includegraphics[scale=0.8]{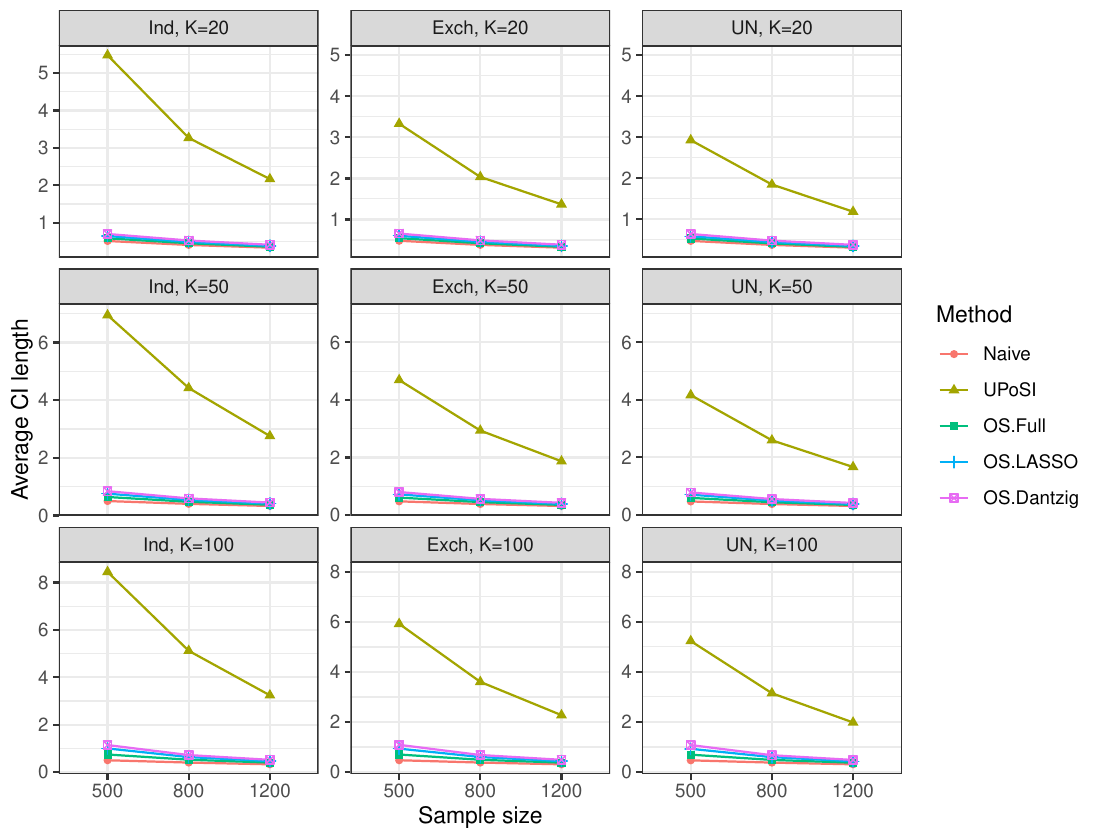}
    \caption{Average confidence interval lengths for each inferential method by number of covariates ($K$), sample size, and correlation structure (Ind: Independent, Exch: Exchangeable, UN: Unstructured).}
    \label{fig.averageCI}
\end{figure}

\begin{figure}[t]
    \centering
    \includegraphics[scale=0.8]{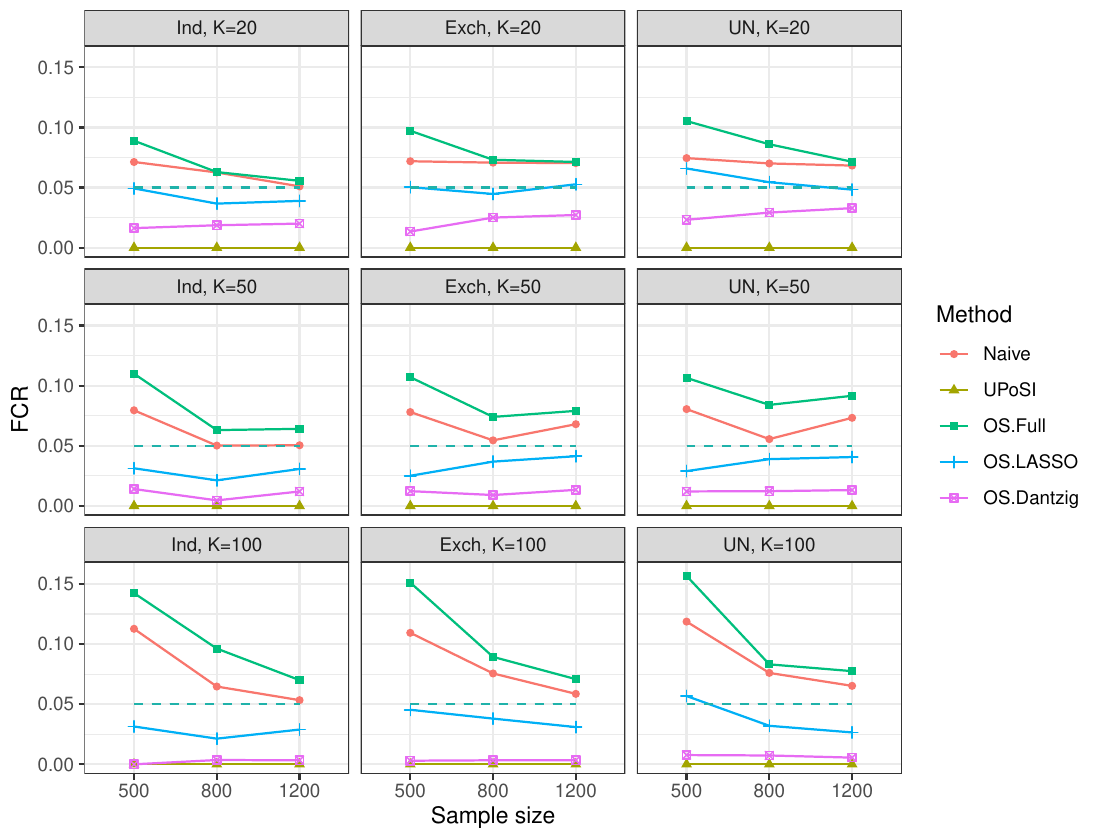}
    \caption{False coverage rates for each inferential method by number of covariates ($K$), sample size, and correlation structure (Ind: Independent, Exch: Exchangeable, UN: Unstructured).}
    \label{fig.fcr}
\end{figure}

\begin{table}[t]
\centering
\caption{Inferential power (for the selected non-zero coefficients) of different methods.} \label{tab.power}
\resizebox{1\textwidth}{!}{
  \begin{tabular}{clccccccccccc}
  \toprule
  &&\multicolumn{3}{c}{$K=20$}&&\multicolumn{3}{c}{$K=50$}&&\multicolumn{3}{c}{$K=100$}\\
  \cline{3-5}\cline{7-9}\cline{11-13}
  Sample size & Method &Ind&Exch&UN&&Ind&Exch&UN&&Ind&Exch&UN\\
  \hline
  $n=500$& Naive & 1.00 & 1.00 & 1.00 &  & 1.00 & 1.00 & 1.00 &  & 1.00 & 1.00 & 1.00 \\ 
   & UPoSI & 0.00 & 0.04 & 0.08 &  & 0.00 & 0.00 & 0.00 &  & 0.00 & 0.00 & 0.00 \\ 
   & OS.Full & 1.00 & 1.00 & 1.00 &  & 1.00 & 1.00 & 1.00 &  & 1.00 & 1.00 & 1.00 \\ 
   & OS.LASSO & 1.00 & 1.00 & 1.00 &  & 1.00 & 1.00 & 1.00 &  & 0.99 & 1.00 & 1.00 \\ 
   & OS.Dantzig & 1.00 & 1.00 & 1.00 &  & 1.00 & 1.00 & 1.00 &  & 0.99 & 0.99 & 1.00  \vspace{0.1cm}\\ 
  $n=800$& Naive & 1.00 & 1.00 & 1.00 &  & 1.00 & 1.00 & 1.00 &  & 1.00 & 1.00 & 1.00 \\ 
   & UPoSI & 0.04 & 0.39 & 0.53 &  & 0.00 & 0.06 & 0.13 &  & 0.00 & 0.00 & 0.02 \\ 
   & OS.Full & 1.00 & 1.00 & 1.00 &  & 1.00 & 1.00 & 1.00 &  & 1.00 & 1.00 & 1.00 \\ 
   & OS.LASSO & 1.00 & 1.00 & 1.00 &  & 1.00 & 1.00 & 1.00 &  & 1.00 & 1.00 & 1.00 \\ 
   & OS.Dantzig & 1.00 & 1.00 & 1.00 &  & 1.00 & 1.00 & 1.00 &  & 1.00 & 1.00 & 1.00   \vspace{0.1cm}\\ 
  $n=1200$& Naive & 1.00 & 1.00 & 1.00 &  & 1.00 & 1.00 & 1.00 &  & 1.00 & 1.00 & 1.00 \\ 
   & UPoSI & 0.32 & 0.94 & 0.98 &  & 0.11 & 0.52 & 0.73 &  & 0.02 & 0.20 & 0.40 \\ 
   & OS.Full & 1.00 & 1.00 & 1.00 &  & 1.00 & 1.00 & 1.00 &  & 1.00 & 1.00 & 1.00 \\ 
   & OS.LASSO & 1.00 & 1.00 & 1.00 &  & 1.00 & 1.00 & 1.00 &  & 1.00 & 1.00 & 1.00 \\ 
   & OS.Dantzig & 1.00 & 1.00 & 1.00 &  & 1.00 & 1.00 & 1.00 &  & 1.00 & 1.00 & 1.00  \\  
  \bottomrule
  \end{tabular}
}\\
Ind: Independent, Exch: Exchangeable, UN: Unstructured
\end{table}

%Describe the results
The model selection performance metrics of the penalized G-estimator are provided in Table~\ref{tab.selection} in Appendix~\ref{A.add.sim.res}. We report the average CI lengths and the false coverage rates in Figure~\ref{fig.averageCI} and Figure~\ref{fig.fcr}, respectively,   and the power in Table~\ref{tab.power}. These metrics were obtained using different inferential methods under three distinct correlation structures for various simulation settings. The naive approach produced the smallest confidence intervals, followed closely by the one-step estimators. Although we see 100\% power for selecting non-zero blip coefficients under the naive approach, the false coverage rates under this approach exceeded the nominal significance level (0.05), especially in small samples and when the number of variables was large. Confidence intervals produced by the UPoSI approach were far wider than those of the other competing methods. While the UPoSI method yielded false coverage rates lower than 0.05 in all of the simulations settings, its power to detect true effects was nearly zero in small samples. The power under UPoSI increased with increasing sample size with a low dimensional number of candidate effect modifiers, but when we had high-dimensional covariates the power was far smaller in comparison to the other methods. Inference based on the one-step improved penalized G-estimator, given that the sparse weight vector is estimated by either LASSO or Dantzig selector, provided false coverage rates lower than 0.05, and also resulted in strong power for the non-zero coefficients selected by the initial penalized method.

We also investigated the performance of the inferential methods under a simulation setting that triggers the G-null paradox. The G-null paradox highlights the possibility that even under the global null of no treatment effect, biased nonzero estimates may arise when both nuisance models are misspecified. The results from our investigation on the G-null paradox is presented in Appendix~\ref{A.gNULL}.

\section{Application} \label{sec.application}
In this section, we illustrate our methodological developments using the cohort and data previously analyzed by \cite{jaman2025penalized}. The cohort consists of patients undergoing chronic hemodiafiltration at the CHUM and the CED, %. We consider Hemodiafiltration as chronic if the patient has at least 28 consecutive sessions. The cohort start-date for each patient was their first dialysis session on or after  March 1st, 2017; the cohort end date was December 1st, 2021. Approximately 215 patients were treated at the outpatient centre (CED) with three and, for some, four dialysis sessions per week. The primary data include information from a total of 474 patients who underwent a total of 170761  dialysis sessions. 
who started their treatment on or after March 1, 2017, and were followed through December 1, 2021, with a total of 474 patients and 170,761 dialysis sessions recorded. \cite{jaman2025penalized} explored the impact of dialysis facility (CHUM vs.\ CED) on session-specific mean convection volumes, with a focus on effect modification by patient characteristics, and found the possibility of effect modification by cancer status. %In this paper, we aim to provide valid inference on the effect of dialysis facility and on the nature and magnitude of effect modifications previously identified. 
This paper extends the prior analysis by providing valid inference and ensuring robust conclusions about the impact of dialysis facility and effect heterogeneity by patient-level factors. %The information extracted from hospital databases for each session includes: drugs (BDM), laboratory results (CERNER), procedures related to dialysis venous access (Radimage), and dialysis-related variables and dialysis-related drugs (EuCliD-NephroCare). We received the cleaned study dataset comprised of time-varying elements of the hemodiafiltration prescription predicting convection volume, concomitant therapies for managing anemia or low albumin or maintaining dialysis access patency, and dialysis session-specific results. Comorbidities that are potential confounders were obtained from the \emph{Maintenance et exploitation des données pour l'étude de la clientèle hospitalière} (MED-ECHO) database. Confounders and potential effect modifiers that we considered in the analysis are previous outcome (24L or less$= 1$ vs.\ more than 24L $= 0$), hemoglobin, albumin, dalteparin, access (fistula = 0, catheter = 1), catheter change, age, sex, and the components of the Charlson Comorbidity Index (hypertension, diabetes, peripheral vascular disease, congestive heart failure, cardiac arrhythmia, acute myocardial infarction, chronic pulmonary disease, liver disease, valvular disease, cancer, metastatic cancer, cerebrovascular disease, dementia, %Peptic ulcer disease, %hemiplegia, and rheumatic disease). The exposure is the dialysis facility, coded as one if the treatment location was CHUM and zero if it was CED. %Some descriptive statistics of the data are presented in Section S9 of the Supplementary Materials.

The data extracted from hospital databases for each session includes time-dependent variables for hemodiafiltration prescriptions and dialysis session-specific outcomes, along with detailed patient characteristics, such as hemoglobin levels, albumin, comorbidities according to the Charlson Index (cancer status, hypertension, diabetes, etc.), and dialysis access type (fistula vs. catheter). For a detailed description of the candidate effect modifiers and additional information about the penalized estimation, we refer readers to the application section in \cite{jaman2025penalized}.
%Our aim was to estimate proximal effects on hemodiafiltration outcomes, and because we did not expect substantial intra-patient variability, we did not expect to gain substantially from having a large number of sessions per patient. We were also concerned with the computer memory and computational issues that would arise from incorporating all of the numerous sessions per patient (which could be as many as 827). Hence, we included the first six consecutive sessions of post-dilution hemodiafiltration for each of the patients in our analysis. We estimated the propensity scores using a logistic regression of exposure conditional on all of the potential confounders using the pooled data set. We performed the proposed penalized G-estimation considering four different correlation structures: independent, exchangeable, autoregressive of order one (AR1), and unstructured.
For each proximal effect estimate, we obtained confidence intervals using the UPoSI approach, the decorrelated score method, and the naive sandwich variance estimator. Consistent with \cite{jaman2025penalized}, we performed our analysis employing the same four correlation structures: independent, exchangeable, autoregressive of order one (AR1), and unstructured. We set $\text{Corr}(U_{ij},U_{ij'}) = \rho^{|j-j'|}$ for $j\neq j'$ under the AR1 structure. The specifications for the other three structures are already provided in Section~\ref{sec.simulation} (simulation study).

The selected blip model under the AR1 correlation structure with adjustment for all potential confounders in the treatment-free part is:
\begin{align*}
    \gamma_j(a_{j}, \vec h_{j};\vec\psi)
    %\text{blip}
    = (\psi_0 + \psi_1\times\text{Cancer}_j)\,\text{CHUM}_j
\end{align*}
where $j=1,2,\ldots,6$. 
The estimates of the blip parameters are given in Figure~\ref{fig.hdf} with associated 95\% confidence intervals obtained using the candidate inferential methods.
\begin{figure}[t]
    \centering
    \includegraphics[scale=0.8]{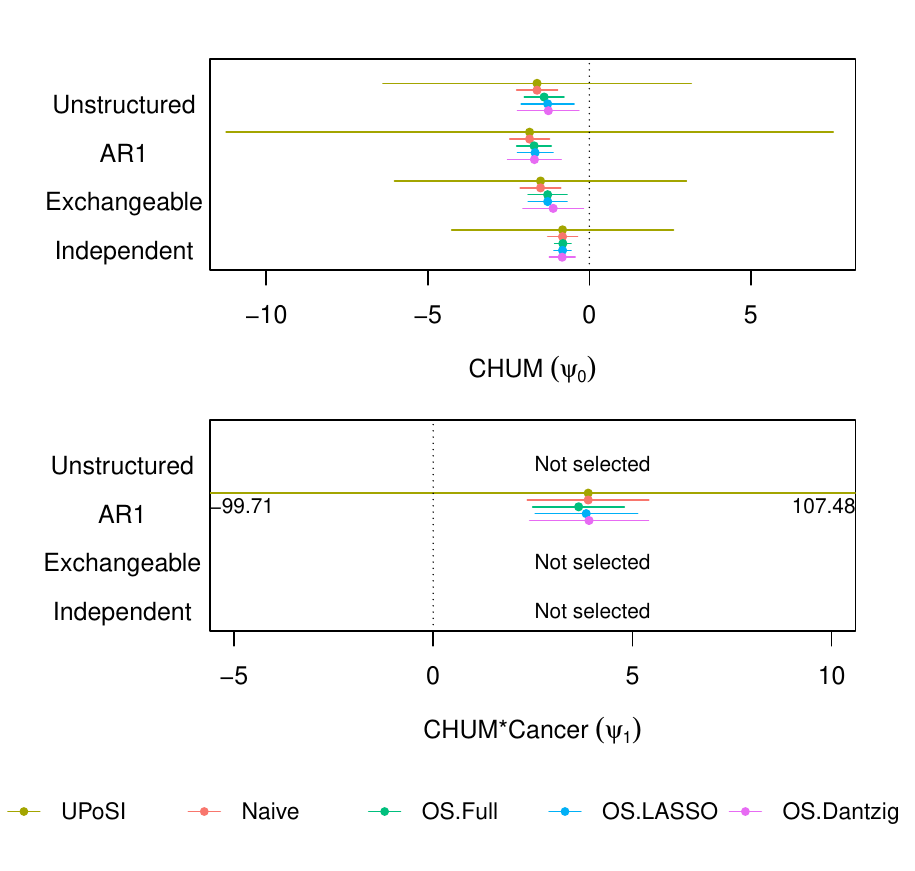}
    \caption{Estimated blip parameters and their corresponding 95\% confidence intervals obtained using the competing inferential methods under different working correlation structures for the hemodiafiltration study.}
    \label{fig.hdf}
\end{figure}
The estimated main effect of the dialysis facility was negative across all the working correlation structures considered, with the associated 95\% confidence intervals excluding zero for all competing methods except the UPoSI. Cancer was identified as an effect modifier only under the AR1 correlation structure, with the corresponding 95\% confidence interval again excluding zero for all methods except the UPoSI. Results under AR1 structure suggest that the effect of dialysis facility on the convection volume differs by the cancer status of the patient. For patients without cancer, the mean convection volume was 1.85 litres lower at CHUM compared to CED, after adjusting for all other confounders. However, for cancer patients, the mean convection volume was $3.89-1.85=2.04$ litres higher at the CHUM.

\section{Discussion} \label{sec.discussion}

We extended and evaluated two post-selection inferential methods for valid inference on the effect modification of proximal treatment effects estimated through penalized G-estimation. The one-step improved penalized G-estimator with a sparse weight vector showed good performance, providing valid inferential guarantees for the target parameters. Using the competing inferential methods, we investigated if the effect of dialysis facility on dialysis outcome (convection volume) differed by the demographics, clinical characteristics, and comorbidity status of patients with end-stage renal disease. Our findings suggest that while the CED generally achieved better hemodiafiltration outcomes, cancer patients with similar measured characteristics might have better outcomes at CHUM compared to CED.

The poor power of random design UPoSI in our simulation results can be well understood from the construction of the confidence region under this method. The confidence region or the coordinate-wise confidence interval according to UPoSI involves the $L_1$ norm of the full parameter vector, which increases with an increasing number of covariates and leads to excessively wide confidence intervals. Further theoretical work is needed to enhance the power of the UPoSI method. Although fixed-design UPoSI produces less conservative confidence intervals as the uncertainty component related to the covariates becomes zero, the dynamic nature of the treatment restricted us from considering the fixed-design UPoSI in our context. In high-dimensional settings, restricting the candidate set to plausible models can mitigate excessive conservatism of this method. Our simulation results demonstrated that the one-step improved penalized G-estimator with sparse weights estimated via the LASSO effectively controls false coverage rates. However, these results may not hold under other forms of misspecification in the outcome model, such as missing exponential terms of confounders. We recommend the Dantzig selector for estimating the sparse weights, as it provides higher-order corrections. It is important to note that the oracle properties of the penalized G-estimator rely on minimal signal strength conditions. Conditional methods like UPoSI, or the naive sandwich estimator can not quantify uncertainty for weak signals missed in the regularized estimation. In contrast, the decorrelated score method does not require variable selection consistency and provides reliable uncertainty estimates for small signals.

Under the decorrelated score approach, constructing separate confidence intervals for each $\psi_k$ may raise concerns about multiple comparisons, particularly when the parameters are interpreted jointly. While each interval is typically constructed to achieve, for example, 95\% coverage individually, the probability that at least one interval fails to cover the true parameter value increases as the number of parameters grows. This is the well-known multiple comparisons problem. Interpreting multiple $\psi_k$ simultaneously, or making claims about which effects are statistically significant, can therefore lead to inflated Type I error rates. This issue is less concerning when each $\psi_k$ corresponds to a pre-specified and scientifically distinct hypothesis, and no joint inference is intended. When necessary, multiplicity adjustments, such as the Bonferroni or Holm procedures, or false discovery rate (FDR) control using the Benjamini-Hochberg method, can be applied to address this issue. In contrast, UPOSI provides valid inference for the full model simultaneously (as formalized in Theorem~\ref{thm.asymp.validity.UPoSI}), enabling coherent joint inference without requiring post hoc multiplicity adjustments.

Future research may explore the robustness of the proposed methods under broader forms of misspecification in the outcome model. Extending these inferential methods to causal frameworks beyond effect modification analysis, including approaches like instrumental variable analyses and mediation analyses, would be a potential future direction.

\section*{Software implementation}
The R-packages for implementing our proposed methods, \texttt{UPoSIPeG} and \texttt{OSIPeG}, are available on GitHub at \url{https://github.com/ajmeryjaman/UPoSIPeG/} and \url{https://github.com/ajmeryjaman/OSIPeG/}, respectively. Both packages contain implementation code as well as illustrative examples that demonstrate how to apply the methods to data.

\section*{Funding}
This work is supported by a doctoral scholarship from the Fonds de Recherche du Québec Nature et technologies (FRQNT) of Canada to AJ and a Discovery Grant from the Natural Sciences and Engineering Research Council of Canada to MES. AE is supported by the research grants (R01DA058996, R01DA048764, and R33NS120240) from the National Institutes of Health. MES is supported by a tier 2 Canada Research Chair.

%\cfoot{Bibliography}
%\rfoot{1}
\bibliographystyle{apalike}
\bibliography{bibliography.bib}

\clearpage

\appendix
\section{Appendix: Additional technical details}\label{A.add.tech.detail}
\setcounter{table}{0}
\setcounter{theorem}{0}
\setcounter{equation}{0}
\renewcommand{\thetable}{\thesection\arabic{table}}
\renewcommand\thetheorem{\thesection.\arabic{theorem}}
\renewcommand{\theassumption}{\thesection.\arabic{assumption}}
\renewcommand{\theequation}{\thesection.\arabic{equation}}

\subsection{Assumptions for identifiability of the SNMM parameters}\label{A.assump}
Assumptions \ref{asmp1}-\ref{asmp3} are the usual causal assumptions for identifiability of the target parameters using the observed data.

\begin{assumption}[Consistency]\label{asmp1}
 The observed outcome is equal to the potential outcome at occasion $j$, for $j=1,\dots,J$, if the observed treatment history matches the counterfactual history at occasion $j$, i.e., $Y_{ij}(\overline{a}_j) = Y_{ij}$, if $\overline{A}_{ij}=\overline{a}_{j}$.   
\end{assumption}

\begin{assumption}[Sequential ignorability]\label{asmp2}
 The potential outcome $Y_{ij}(\overline{a}_{j-1},0)$ is independent of $A_{ij}$ conditional on $\vec H_{ij}$ and $\Bar{A}_{i,j-1}$, for $j=1,\dots,J$.   
\end{assumption}

\begin{assumption}[Positivity]\label{asmp3}
  If the joint density of $\vec H_{ij}$ at $\{\vec h_{ij}\}$ is greater than zero, then $\mathbb{P}(A_{ij}=a_j|\vec H_{ij}=\vec h_{ij}) > 0$ for all $a_j$, $j=1,\dots,J$.  
\end{assumption}

Note that, the SNMM models the causal effect as the expected difference between treated and untreated potential outcomes as shown in (\ref{snmm.proximal}) in the manuscript. Rearranging the terms in this equation we obtain the following:
\begin{align*}
        E&[Y_{ij}(\Bar{a}_{j-1}, 0)|\vec H_{ij}=\vec h_{ij},\Bar{A}_{ij}=\Bar{a}_{j}]\\
        &= E[Y_{ij}(\Bar{a}_{j-1}, a_j)-\gamma_j^*(a_{j},\vec h_{ij};\vec\psi)|\vec H_{ij}=\vec h_{ij},\Bar{A}_{ij}=\Bar{a}_{j}]\\
        &=E[Y_{ij}-\gamma_j^*(a_{j},\vec h_{ij};\vec\psi)|\vec H_{ij}=\vec h_{ij},\Bar{A}_{ij}=\Bar{a}_{j}];\qquad[\text{by consistency assumption}]\\
        &=E[U_{ij}|\vec H_{ij}=\vec h_{ij},\Bar{A}_{ij}=\Bar{a}_{j}];\qquad[\text{by definition}].
        \end{align*}
Under G-estimation, the estimating equations are constructed based on the blipped-down outcomes $U_{ij}$. If $Y_{ij}(\overline{a}_{j-1},0) \perp A_{ij}|\vec H_{ij},\Bar{A}_{i,j-1}$ or equivalently $U_{ij} \perp A_{ij}|\vec H_{ij},\Bar{A}_{i,j-1}$, this independence gives the necessary moment condition for G-estimation so that we have $E[\vec S^{\text{eff}}(\vec\psi)]=\vec 0$. The blipped-down outcome $U_{ij}$ removes the causal effect of $A_{ij}$. After removing the effect, the residual (the counterfactual under no treatment) is orthogonal to the treatment assignment given history. This orthogonality identifies $\vec\psi$ via G-estimation.

\subsection{Technicalities related to the working correlation matrix}\label{A.sec.est.R}

For subject $i$ the correlation matrix $\vec R_i(\rho)$ is unknown and is replaced by the estimate $\widehat{\vec R}$ while performing penalized G-estimation. We need the following assumption on the correlation matrix for asymptotic validity of the proposed inferential methods.

\begin{assumption}\label{asmp4}
 The common true correlation matrix $\vec R_0$ for the observed outcomes has eigen values bounded away from zero and $+\infty$. The estimated working correlation matrix $\hat{\vec R}$ satisfies $||\hat{\vec R}^{-1} - \overline{\vec R}^{-1}|| = O_{\mathbb{P}}(\sqrt{1/n})$, 
    %$\hat{\vec R} - \overline{\vec R} \overset{L^1}{\rightarrow} 0$ (elementwise), 
where $\overline{\vec R}$ is a constant positive definite matrix with eigen values bounded away from zero and $+\infty$, and $||\cdot||$ denotes the Frobenius norm.%Here, $s$ represents the number of unpenalized main effects plus the number of penalized modified effects which are truly nonzero.   
\end{assumption}

Under Assumption~\ref{asmp4}, the two versions of the expected information have the following forms:  
\begin{align*}
\vec H(\vec\theta)&= \text{E}\{-\partial\vec S_i^{\text{eff}}(\vec\theta)/\partial\vec\theta^\top\} = \vec D_{i}^\top\vec Q_i^{-1/2}\overline{\vec R}^{-1}\vec Q_i^{-1/2}(\vec H_i\;\;\vec A_i\circledast\vec H_i),\\
\vec I(\vec\theta)&= \text{E}\{\vec S_i^{\text{eff}}(\vec\theta)\vec S_i^{\text{eff}}(\vec\theta)^\top\} = \vec D_{i}^\top\vec Q_i^{-1/2}\overline{\vec R}^{-1}\vec R_0\overline{\vec R}^{-1}\vec Q_i^{-1/2}\vec D_{i}.
\end{align*}
Details regarding the expressions of these information matrices can be found in \cite{balan2005asymptotic}, where the authors presented a rigorous asymptotic theory for generalized estimating equations.

%\subsection{Regularity conditions} \label{regularity.cond}
Some other regularity conditions for the desired asymptotic properties of the penalized G-estimator are as follows:
\begin{enumerate}
    \item [(C1)] All variables in $\vec D_{ij}$, $i=1,\ldots,n$, $j=1,\ldots,J$, are uniformly bounded.
    \item [(C2)] The unknown parameter $\vec\theta_n$ belongs to a compact subset $\vec\Theta \subseteq R^{2K}$ and the true parameter $\vec\theta_0$ lies in the interior of $\vec\Theta$.
    \item [(C3)] There exists finite positive constants $c_1$ and $c_2$ such that
    \begin{align*}
        c_1 \leq \omega_{min} \Bigg(\dfrac{\sum_{i=1}^n\vec D_i^\top(\vec H_i\;\;\vec A_i\circledast\vec H_i)}{n}\Bigg) \leq \omega_{max} \Bigg(\dfrac{\sum_{i=1}^n\vec D_i^\top(\vec H_i\;\;\vec A_i\circledast\vec H_i)}{n}\Bigg)\leq c_2,
    \end{align*}
    where $\omega_{min}(\vec D)$ and $\omega_{max}(\vec D)$ denote the minimum and maximum of the eigenvalues, respectively, of the matrix $\vec D$.
    \item [(C4)] Let $\vec\xi_i(\vec\theta_n) = (\vec\xi_{i1}(\vec\theta_n),\ldots,\vec\xi_{in_i}(\vec\theta_n))^\top=\vec Q_i^{-1/2}(\vec Y_i - \vec g_i(\vec\theta_n))$. There exists a finite constant $d_1 > 0$ such that $E(||\vec\xi_i(\vec\theta_0)||^{2+\rho}) \leq d_1$ for all $i$ and some $\rho > 0$; and there exists positive constants $d_2$ and $d_3$ such that $E(\exp(d_2|\xi_{ij}(\vec\theta_0))|\vec D_i) \leq d_3$, uniformly in $i=1,\ldots,n$, $j=1,\ldots,J$. 
    \item [(C5)] Let $T_n=\{\vec\theta_n: ||\vec\theta_n - \vec\theta_0||\leq \Delta\sqrt{1/n}\}$, then $g'(\vec D_{ij}\vec\theta_n)$, $i=1,\ldots,n$, $j=1,\ldots,J$, are uniformly bounded away from 0 and $\infty$ on $T_n$; $g''(\vec D_{ij}\vec\theta_n)$ and $g'''(\vec D_{ij}\vec\theta_n)$, $i=1,\ldots,n$, $j=1,\ldots,J$, are uniformly bounded by a finite positive constant $d_2$ on $T_n$; $g'(.)$, $g''(.)$ and $g'''(.)$ denote the first, second and third derivatives of the function $g(.)$, respectively.
    \item [(C6)] Suppose $s$ denotes the number of the non-zero blip coefficients plus the number of other fixed parameters in the outcome mean model. When $s$ is not fixed, assuming $\min_{m \in B} |\theta_{0m}|/\lambda_n \rightarrow \infty$ as $n \rightarrow \infty$ and $s_n^3n^{-1} = o(1)$, $\lambda_n \rightarrow 0$, $s_n^2(\log n)^4 = o(n\lambda_n^2)$, $\log(K_n) = o(n\lambda_n^2/(\log n)^2)$, $K_ns_n^4(\log n)^6 = o(n^2\lambda_n^2)$, and $K_ns_n^3(\log n)^8 = o(n^2\lambda_n^4)$. Note that $\lambda_n$ is the tuning parameter.
\end{enumerate}
These conditions are similar to the regularity conditions in \cite{wang2012penalized,jaman2025penalized}, some of which maybe further relaxed.

\subsection{Technical details related to the UPoSI approach}\label{A.tech.detail.uposi}
%% State required inequalities to be used later

We follow \cite{kuchibhotla2020valid} to state and prove asymptotic validity of the UPoSI method. We will require the following inequality:
\begin{align} \label{req.ineq}
    ||B v||_\infty \leq ||B||_\infty ||v||_1.
\end{align}

%% State and prove the theorem of equivalance (THM 3.1)
\begin{theorem} \label{thm.equiv}
    For a given sets of models $\mathcal{M}_K$, any set of confidence regions $\{\widehat{\mathcal{R}}_{n,M} : M \in \mathcal{M}_K\}$, and significance level $\alpha \in [0,\, 1]$, the statements (a) and (b) are equivalent:
    \begin{enumerate}
        \item[(a)] The post-selection inference problem is solved, meaning that
\begin{align*}
    \mathbb{P}(\vec\theta_{n,\widehat{M}} \in \widehat{\mathcal{R}}_{n,\widehat{M}}) \geq 1 - \alpha.
\end{align*}
        \item[(b)] The simultaneous inference problem for $M \in \mathcal{M}_K$ is solved, meaning that
\begin{align*}
    P\Bigg(\bigcap_{M \in \mathcal{M}_K}\{\vec\theta_{n,M} \in \widehat{\mathcal{R}}_{n,M}\}\Bigg) \geq 1 - \alpha.
\end{align*}
    \end{enumerate}
\end{theorem}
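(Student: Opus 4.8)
The plan is to prove the two implications (b) $\Rightarrow$ (a) and (a) $\Rightarrow$ (b) separately, exploiting the fact that $\mathcal{M}_K$ is a \emph{finite} collection (the power set of $\{1,\dots,K\}$ minus the empty set) and that any data-driven rule $\widehat{M}=\widehat{M}(\mathcal{D})$ takes values in $\mathcal{M}_K$ with probability one. Throughout I read statement (a) as the requirement that coverage hold for an \emph{arbitrary} selection rule $\widehat{M}$, which is the sense in which post-selection inference is ``solved''.

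For the direction (b) $\Rightarrow$ (a), I would argue by a simple event containment. On the simultaneous-coverage event $\bigcap_{M\in\mathcal{M}_K}\{\vec\theta_{n,M}\in\widehat{\mathcal{R}}_{n,M}\}$ every model in the collection is covered; since the realized $\widehat{M}$ is one of those models, the single event $\{\vec\theta_{n,\widehat{M}}\in\widehat{\mathcal{R}}_{n,\widehat{M}}\}$ must also hold. Hence the simultaneous event is a subset of the post-selection event, and monotonicity of probability yields $\mathbb{P}(\vec\theta_{n,\widehat{M}}\in\widehat{\mathcal{R}}_{n,\widehat{M}})\ge\mathbb{P}\big(\bigcap_{M}\{\vec\theta_{n,M}\in\widehat{\mathcal{R}}_{n,M}\}\big)\ge 1-\alpha$.

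The substantive direction is (a) $\Rightarrow$ (b), which I would establish by constructing a \emph{worst-case} (adversarial) selection rule. Enumerate the finitely many models as $M_1,\dots,M_L$ with $L=2^K-1$, and on each realization of $\mathcal{D}$ define $\widehat{M}^\ast$ to be $M_j$ for the smallest index $j$ with $\vec\theta_{n,M_j}\notin\widehat{\mathcal{R}}_{n,M_j}$ whenever such an index exists, and an arbitrary fixed model otherwise. With this construction the failure event $\{\vec\theta_{n,\widehat{M}^\ast}\notin\widehat{\mathcal{R}}_{n,\widehat{M}^\ast}\}$ coincides exactly with the union $\bigcup_{j}\{\vec\theta_{n,M_j}\notin\widehat{\mathcal{R}}_{n,M_j}\}$, that is, with the complement of the simultaneous-coverage event. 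Applying hypothesis (a) to this particular rule gives $\mathbb{P}(\vec\theta_{n,\widehat{M}^\ast}\in\widehat{\mathcal{R}}_{n,\widehat{M}^\ast})\ge 1-\alpha$, and rewriting the left-hand side in terms of the simultaneous event delivers (b).

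The main obstacle is ensuring that the adversarial rule $\widehat{M}^\ast$ is a legitimate, measurable, data-driven selection rule to which (a) is entitled to apply. This is precisely where finiteness of $\mathcal{M}_K$ is essential: each individual coverage event $\{\vec\theta_{n,M_j}\in\widehat{\mathcal{R}}_{n,M_j}\}$ is measurable, and the ``smallest index'' selection over finitely many such events is therefore measurable, so $\widehat{M}^\ast$ is well defined as a function of $\mathcal{D}$. Once this measurability is in hand, the remaining steps are purely set-theoretic identities relating the coverage and non-coverage events, and no distributional or analytic input is required.
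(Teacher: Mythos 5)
Your proposal is correct and follows essentially the same route as the paper: the containment argument for (b) $\Rightarrow$ (a), and for (a) $\Rightarrow$ (b) an adversarial selector that deliberately picks a non-covered model whenever one exists, which is exactly the paper's $\widehat M \in \arg\min_{M}\bm{1}\{\vec\theta_{n,M}\in\widehat{\mathcal{R}}_{n,M}\}$ with your ``smallest index'' rule serving as a particular tie-break. Your explicit remark on measurability of the adversarial rule is a detail the paper leaves implicit, but it does not change the argument.
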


\begin{proof}
Let $\mathcal{F}_M=\{\vec\theta_{n,M}\in \widehat{\mathcal{R}}_{n,M}\}$ denote one coverage event inside (b) for a fixed $M \in \mathcal{M}_K$. For a random model $\widehat M$, let $\mathcal{F}_{\widehat M}=\{\widehat\theta_{n,\widehat M}\in \widehat{\mathcal{R}}_{n,\widehat M}\}$ be the coverage event inside (a). Both are random events since both of the confidence regions are random.

Note that $\mathcal{F}_{\widehat M} \supseteq \bigcap_{M \in \mathcal{M}_K} \mathcal{F}_M$ since $\widehat M \in \mathcal{M}_K$. Hence, if (b) is true, it implies that (a) is also true.

To prove the converse, it is sufficient to construct a data-driven selection procedure $\widehat M$ that satisfies
\begin{align} \label{suff.cond.equiv}
    \mathcal{F}_{\widehat M} = \bigcap_{M \in \mathcal{M}_K} \mathcal{F}_M.
\end{align}
Let $\widehat M$ be any selection procedure that satisfies
\begin{align*}
    \widehat M \in \underset{M \in \mathcal{M}_K}{\text{arg min }} \bm{1}\{\mathcal{F}_{M}\},
\end{align*}
where $\bm{1}\{E\}$ represents the indicator function of the event $E$. It follows that $\bm{1}\{\mathcal{F}_{\widehat M}\}=\text{min}_{M \in \mathcal{M}_K} \bm{1}\{\mathcal{F}_{M}\}$, which is equivalent to (\ref{suff.cond.equiv}). Hence, (a) implies (b).
\end{proof}

\begin{theorem} \label{thm.UPoSI.region.derivation}
The UPoSI confidence regions $\{\widehat{\mathcal{R}}_{n,M} : M \in \mathcal{M}_K(k)\}$ defined in (9) in the manuscript, satisfy 
\begin{align} \label{conf.statement.simultn}
    P\Bigg(\bigcap_{M \in \mathcal{M}_K(k)}\{\vec\theta_{n,M} \in \widehat{\mathcal{R}}_{n,M}\}\Bigg) \geq 1 - \alpha.
\end{align}
Also, for any selected model $\widehat{M}$, where $\widehat{M} \in \mathcal{M}_K(k)$, the following is satisfied
\begin{align}\label{conf.statement.select}
    \mathbb{P}(\vec\theta_{n,\widehat{M}} \in \widehat{\mathcal{R}}_{n,\widehat{M}}) \geq 1 - \alpha.
\end{align}
\end{theorem}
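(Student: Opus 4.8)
The plan is to prove the simultaneous statement (\ref{conf.statement.simultn}) first and then obtain the post-selection statement (\ref{conf.statement.select}) as an immediate consequence of Theorem~\ref{thm.equiv} applied with $\mathcal{M}_K = \mathcal{M}_K(k)$. For the simultaneous statement I would work on the ``good event''
\[
\mathcal{E} := \{D_n^{G} \leq C_n^{G}(\alpha)\} \cap \{D_n^{W} \leq C_n^{W}(\alpha)\},
\]
which by the defining property of the joint quantiles satisfies $\mathbb{P}(\mathcal{E}) \geq 1-\alpha$. The goal is to show that on $\mathcal{E}$ the inclusion $\vec\theta_{n,M} \in \widehat{\mathcal{R}}_{n,M}$ holds for \emph{every} $M \in \mathcal{M}_K(k)$ simultaneously, so that the intersection event in (\ref{conf.statement.simultn}) contains $\mathcal{E}$.

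The key algebraic step is to rewrite the quantity controlling membership in $\widehat{\mathcal{R}}_{n,M}$ using both normal equations (\ref{normal1}) and (\ref{normal2}). Substituting $\widehat{\vec W}_n(M)\widehat{\vec\theta}_{n,M} = \widehat{\vec G}_n(M)$, then adding and subtracting $\vec G_n(M)$ and using $\vec G_n(M) = \vec W_n(M)\vec\theta_{n,M}$, yields the identity
\begin{align*}
\widehat{\vec W}_n(M)\{\widehat{\vec\theta}_{n,M} - \vec\theta_{n,M}\}
= \{\widehat{\vec G}_n(M) - \vec G_n(M)\} - \{\widehat{\vec W}_n(M) - \vec W_n(M)\}\vec\theta_{n,M}.
\end{align*}
Taking $||\cdot||_\infty$, applying the triangle inequality, and bounding the second term with the inequality (\ref{req.ineq}) gives
\begin{align*}
||\widehat{\vec W}_n(M)\{\widehat{\vec\theta}_{n,M} - \vec\theta_{n,M}\}||_\infty
\leq ||\widehat{\vec G}_n(M) - \vec G_n(M)||_\infty
+ ||\widehat{\vec W}_n(M) - \vec W_n(M)||_\infty \, ||\vec\theta_{n,M}||_1.
\end{align*}
Since $\widehat{\vec G}_n(M)$ and $\widehat{\vec W}_n(M)$ are a subvector and submatrix of the full-model quantities, their entrywise errors are dominated by the full-model errors, i.e.\ $||\widehat{\vec G}_n(M) - \vec G_n(M)||_\infty \leq D_n^{G}$ and $||\widehat{\vec W}_n(M) - \vec W_n(M)||_\infty \leq D_n^{W}$ for every $M$. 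On $\mathcal{E}$ these are further bounded by $C_n^{G}(\alpha)$ and $C_n^{W}(\alpha)$, so the right-hand side is at most $C_n^{G}(\alpha) + C_n^{W}(\alpha)||\vec\theta_{n,M}||_1$, which is precisely the defining condition of $\widehat{\mathcal{R}}_{n,M}$ in (\ref{uposi.region.finite}).

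I expect the main subtlety, rather than the algebra, to be the \emph{uniformity} of the bounds over all submodels. This is resolved by noting that $D_n^{W}$ and $D_n^{G}$ are defined as entrywise maxima (equivalently, as the $||\cdot||_\infty$ of the full-model differences), so a single pair of quantiles $(C_n^{G}(\alpha), C_n^{W}(\alpha))$ simultaneously controls the estimation error for every $M \in \mathcal{M}_K(k)$ at once. Consequently $\mathcal{E} \subseteq \bigcap_{M \in \mathcal{M}_K(k)} \{\vec\theta_{n,M} \in \widehat{\mathcal{R}}_{n,M}\}$, which establishes (\ref{conf.statement.simultn}) with probability at least $\mathbb{P}(\mathcal{E}) \geq 1-\alpha$. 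The post-selection guarantee (\ref{conf.statement.select}) then follows at once from the equivalence in Theorem~\ref{thm.equiv}, since the simultaneous statement is exactly part (b) and the selected-model statement is part (a).
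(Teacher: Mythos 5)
Your proposal is correct and follows essentially the same route as the paper's proof: the identical algebraic identity from the two normal equations, the triangle inequality together with inequality (\ref{req.ineq}), the submodel-to-full-model domination giving $D_n^{G}$ and $D_n^{W}$, and the reduction of the post-selection claim to Theorem~\ref{thm.equiv}. The only cosmetic difference is that you condition on the good quantile event up front, whereas the paper first records the deterministic ``probability one'' inequality and then invokes the quantile definitions; the two framings are equivalent.
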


\begin{proof}
This proof is free of stochastic assumptions. If we subtract the expected equation (7) from the empirical equation (6), for any $M \in \mathcal{M}_K(k)$ the following holds:
\begin{align*}
    \widehat{\vec W}_n(M)\widehat{\vec\theta}_{n,M} - \vec W_n(M)\vec\theta_{n,M} = \widehat{\vec G}_n(M) - \vec G_n(M).
\end{align*}
If we subtract and add $\widehat{\vec W}_n(M)\vec\theta_{n,M}$ on the left-hand side of this equation we have
\begin{align*}
    \widehat{\vec W}_n(M)(\widehat{\vec\theta}_{n,M} - \vec\theta_{n,M}) 
    + \big(\widehat{\vec W}_n(M) - \vec W_n(M)\big)\vec\theta_{n,M} = \widehat{\vec G}_n(M) - \vec G_n(M).
\end{align*}
If we move the second term from the left to the right-hand side of the equality, take the sup norm and apply the triangle inequality on the right-hand side, we get the following:
\begin{align*}
    ||\widehat{\vec W}_n(M)(\widehat{\vec\theta}_{n,M} - \vec\theta_{n,M})||_\infty \leq ||\widehat{\vec G}_n(M) - \vec G_n(M)||_\infty + ||\big(\widehat{\vec W}_n(M) - \vec W_n(M)\big)\vec\theta_{n,M}||_\infty. 
\end{align*}
If we use the inequality (\ref{req.ineq}) to the last term it follows that
\begin{align*}
    ||\widehat{\vec W}_n(M)(\widehat{\vec\theta}_{n,M} - \vec\theta_{n,M})||_\infty \leq ||\widehat{\vec G}_n(M) - \vec G_n(M)||_\infty + ||\widehat{\vec W}_n(M) - \vec W_n(M)||_\infty ||\vec\theta_{n,M}||_1. 
\end{align*}
Since $\widehat{\vec W}_n(M) - \vec W_n(M)$ and $\widehat{\vec G}_n(M) - \vec G_n(M)$ are a submatrix and a subvector of $\widehat{\vec W}_n - \vec W_n$ and $\widehat{\vec G}_n - \vec G_n$, respectively, we can write
\begin{align}\label{events.ineq}
    ||\widehat{\vec W}_n(M)(\widehat{\vec\theta}_{n,M} - \vec\theta_{n,M})||_\infty \leq ||\widehat{\vec G}_n - \vec G_n||_\infty + ||\widehat{\vec W}_n - \vec W_n||_\infty ||\vec\theta_{n,M}||_1. 
\end{align}
This inequality is true for any sample and for any submodel $M \in \mathcal{M}_K(k)$. These enable us to take the intersection of the events (\ref{events.ineq}) over all possible submodels and transform it into a ``probability one" statement. Using the definitions of $D_n^{G}$ and $D_n^{W}$, we have
\begin{align}
    P\Bigg(\bigcap_{M \in \mathcal{M}_K(k)}\bigg\{||\widehat{\vec W}_n(M)(\widehat{\vec\theta}_{n,M} - \vec\theta_{n,M})||_\infty \leq  D_n^{G} +  D_n^{W}||\vec\theta_{n,M}||_1\bigg\}\Bigg) = 1.
\end{align}
Considering the definitions of $C_n^{G}(\alpha)$ and $C_n^{W}(\alpha)$ the proof of (\ref{conf.statement.simultn}) is complete. The proof of (\ref{conf.statement.select}) follows by an application of Theorem~\ref{thm.equiv}.
\end{proof}

%% Asymptotic validity

\begin{proof}[\bf Proof of Theorem~\ref{thm.asymp.validity.UPoSI}]
For all $k \geq 1$ satisfying $kD_n^{W} \leq \omega_n(k)$ and for all $M \in \mathcal{M}_K(k)$, if $\widehat{\vec\theta}_{n,M}$ is an uniform-in-model consistent estimator of $\vec\theta_{n,M}$ then
    \begin{align} \label{uniform.cond}
        ||\widehat{\vec\theta}_{n,M} - \vec\theta_{n,M}||_1 \leq \frac{|M|(D_n^{G} +  D_n^{W}||\vec\theta_{n,M}||_1)}{\omega_n(k)-kD_n^{W}}.
    \end{align}
Under the assumption on the minimum eigen value equation (\ref{uniform.cond}) implies that for all $M \in \mathcal{M}_K(k)$,
\begin{align*}
    \Bigg|\frac{D_n^{G} +  D_n^{W}||\widehat{\vec\theta}_{n,M}||_1}{D_n^{G} +  D_n^{W}||\vec\theta_{n,M}||_1} - 1\Bigg| 
    &\leq \frac{D_n^{W}||\widehat{\vec\theta}_{n,M} - \vec\theta_{n,M}||_1}{D_n^{G} +  D_n^{W}||\vec\theta_{n,M}||_1}\\
    &\leq \frac{D_n^{W}}{D_n^{G} +  D_n^{W}||\vec\theta_{n,M}||_1} \cdot \frac{|M|(D_n^{G} +  D_n^{W}||\vec\theta_{n,M}||_1)}{\omega_n(k)-|M|D_n^{W}}\\
    &\leq \frac{kD_n^{W}}{\omega_n(k)-kD_n^{W}}.
\end{align*}
Therefore,
\begin{align*}
   \underset{M \in \mathcal{M}_K(k)}{\text{sup}} \Bigg|\frac{D_n^{G} +  D_n^{W}||\widehat{\vec\theta}_{n,M}||_1}{D_n^{G} +  D_n^{W}||\vec\theta_{n,M}||_1} - 1\Bigg| \leq \frac{kD_n^{W}/\omega_n(k)}{1-(kD_n^{W}/\omega_n(k))} = o_{\mathbb{P}}(1).
\end{align*}
Hence,
\begin{align*}
    \underset{n \rightarrow \infty}{\text{lim inf }}     P\Bigg(\bigcap_{M \in \mathcal{M}_K(k)}\Big\{
    ||\widehat{\vec W}_n(M)(\widehat{\vec\theta}_{n,M} - \vec\theta_{n,M})||_\infty \leq  D_n^{G} +  D_n^{W}||\widehat{\vec\theta}_{n,M}||_1\Big\}\Bigg) = 1.
\end{align*}
\end{proof}
Following the definitions of $C_n^{G}(\alpha)$ and $C_n^{W}(\alpha)$ we have the required result.

%% Need to mention the assumptions on the multiplier weights

\subsubsection{Multiplier bootstrap} \label{A.sec.multp.boot}

The computation of the UPoSI confidence regions~(\ref{uposi.region.asymp}) depends on the estimation of the joint quantiles $C_n^{G}(\alpha)$ and $C_n^{W}(\alpha)$ using the data.  The multiplier bootstrap is a fast and easy-to-implement alternative to the standard resampling bootstrap. The use of multiplier bootstrap to estimate these quantiles can be justified by an application of the high-dimensional central limit theorem \citep{kuchibhotla2020valid}. The applicability of multiplier bootstrap for estimating the standard error of parameter estimates when analyzing clustered data using GEE has been mentioned in \cite{li2008smooth} and \cite{cheng2013cluster}. 

We define the subject-specific vector $\vec Z_i$ that contains the contribution of subject $i$ to $\widehat{\vec W}_n$ and $\widehat{\vec G}_n$ as follows:
\begin{align*}
  \vec Z_i =
  &\Bigg(\Big\{\vec h_{i[k]}^\top\vec V_i^{-1}\vec y_i\Big\}_{1\leq k\leq K}\text{ , }
  \Big\{\Big[\{\vec a_i-E(\vec A_i|\vec H_i)\}\odot\vec h_{i[k]}\Big]^\top\vec V_i^{-1}\vec y_i\Big\}_{1\leq k\leq K}\text{ , }
  \Big\{\vec h_{i[k]}^\top\vec V_i^{-1}\vec h_{i[k']}\Big\}_{1\leq k \leq k' \leq K}\text{ , }\\
  &\Big\{\vec h_{i[k]}^\top\vec V_i^{-1}(\vec a_i\odot\vec h_{i[k']})\Big\}_{1\leq k \leq k' \leq K}\text{ , }
  \Big\{\Big[\{\vec a_i-E(\vec A_i|\vec H_i)\}\odot\vec h_{i[k]}\Big]^\top\vec V_i^{-1}\vec h_{i[k']}\Big\}_{1\leq k \leq k' \leq K}\text{ , }\\
  &\Big\{\Big[\{\vec a_i-E(\vec A_i|\vec H_i)\}\odot\vec h_{i[k]}\Big]^\top\vec V_i^{-1}(\vec a_i\odot\vec h_{i[k']})\Big\}_{1\leq k \leq k' \leq K}
  \bigg),
\end{align*}
where $\vec h_{i[k]}$ represents the $k$-th column vector of the matrix $\vec h_i$ for $k=1,\ldots,K$, and ``$\odot$" denotes the element-wise product. The number of elements in $\vec Z_i$ is $K+K+4\times\{K+K(K-1)/2\}=O(K^2)$. We define an event $\{D_n^G \leq d_1 \text{ and }D_n^W \leq d_2\}$ for constructing the the bivariate quantiles for $D_n^G$ and $D_n^W$. As shown by \cite{kuchibhotla2020valid}, this event for any $d_1,d_2 \geq 0$ can be written as a symmetric rectangle in terms of
\begin{align*}
    S_n^Z=\frac{1}{\sqrt{n}}\sum_{i=1}^n\{\vec Z_i-E(\vec Z_i)\}.
\end{align*}
Let $r_1, r_2, \ldots, r_n$ be independent standard normal random variables and define
\begin{align*}
    S_n^{r,Z} = \frac{1}{\sqrt{n}}\sum_{i=1}^n r_i(\vec Z_i - \overline{\vec Z}_n) \text{ with } \overline{\vec Z}_n = \frac{1}{n}\sum_{i=1}^n \vec Z_i.
\end{align*}
Also let $S_n^{r,Z}(I)$ represents the first $2K$ elements of $S_n^{r,Z}$ that contribute towards estimation of $\vec G_n$ and $S_n^{r,Z}(II)$ represents the remaining elements of $S_n^{r,Z}$ that contribute towards estimation of $\vec W_n$. Then the joint quantiles can be estimated using the following steps:
\begin{enumerate}
    \item Generate $R_n$ random vectors of dimension $n$ from a standard normal distribution and denote the generations with $r_{i,j}$ for $i=1,2,\ldots,n$ and $j=1,2,\ldots,R_n$.
    \item Compute the $j$-th replicate of $S_n^{r,Z}$ as
    \begin{align*}
        S_{n,j}^* = \frac{1}{\sqrt{n}}\sum_{i=1}^n r_{i,j}(\vec Z_i - \overline{\vec Z}_n) \text{ for } j=1,2,\ldots, R_n.
    \end{align*}
    \item Find any two quantities $\widehat{C}_n^G(\alpha)$ and $\widehat{C}_n^W(\alpha)$ such that
    \begin{align*}
        \frac{1}{R_n}\sum_{j=1}^{R_n} \bm{1}\Big\{||S_{n,j}^*(I)||_{\infty} \leq \widehat{C}_n^G(\alpha), ||S_{n,j}^*(II)||_{\infty} \leq \widehat{C}_n^W(\alpha)\Big\} \geq 1-\alpha,
    \end{align*}
    where $\bm{1}\{E\}$ represents the indicator function of the event $E$.
\end{enumerate}
\noindent
\cite{cheng2013cluster} provided theoretical proof of the estimation consistency for the exchangeably weighted cluster bootstrap method for GEE. Multiplier bootstrap can be viewed as a special class of the weighted bootstrap and it satisfies all the required conditions on the weights for consistency to hold \citep{cheng2013cluster}. However, we only expect the asymptotic conservativeness of the proposed multiplier bootstrap instead of consistency, because we replace $E(\vec Z_i)$ with $\Bar{\vec Z}_n$ which is not a consistent estimator.

\subsubsection{Coordinate-wise confidence interval under UPoSI} \label{A.sec.conf.int}
We can construct  coordinate-wise confidence interval \citep{kuchibhotla2020valid} for the $k$-th coefficient in $\vec\psi$ having the form $\widehat{\psi}_k \pm \widehat{\mathcal{L}}_{k,\widehat{M}}$ under any selected model $\widehat{M}$ (could be any submodel $M \in \mathcal{M}$), where $\widehat{\psi}_k$ represents the $k$-th element of the target estimate $\widehat{\vec\psi}$ in the penalized G-estimate $\widehat{\vec\theta} = (\widehat{\vec\delta},\widehat{\vec\psi})^\top$ and $\widehat{\mathcal{L}}_{k,\widehat{M}}$ represents the half-length of the confidence interval. We can compute the half-length of the interval as follows:
\begin{align*}
  \widehat{\mathcal{L}}_{k,\widehat{M}} = \Bigg|\vec c_k'\Big\{\widehat{\vec W}_n(\widehat{M})\Big\}^{-1}\Bigg| \Bigg(\widehat{C}_n^G(\alpha) + \widehat{C}_n^W(\alpha)||\widehat{\vec\theta}||_1\Bigg) 
\end{align*}
where $\vec c_k$ is a vector of zeros with value 1 corresponding to the position $k$.

\subsection{Technical details related to the decorrelated score method} \label{A.tech.detail.one.step}
%The assumptions, the theorem of asymptotic normality and the proof
Let $\vec\theta^* = (\vec\delta^{*\top}, \vec\psi^{*\top})^\top$ denote the true values of $\vec\theta = (\vec\delta^{\top}, \vec\psi^{\top})^\top$, $\vec S_{\vec\psi^*} = \vec S_{\vec \psi}(\vec \theta^*)$ be the sub-vector of $S(\vec\theta^*)$ corresponding to $\vec\psi$, and $\vec I^* = E[\vec S_{\vec\psi^*}\vec S_{\vec\psi^*}^\top]$. Recall that for inference regarding $\psi_k$, i.e., the parameter of interest, we made a partition of the target parameter vector as $\vec\psi=(\psi_k,\vec\nu_k)$, where  $k$ can take any value in $\{0,1,\ldots, K-1\}$ and $\vec\nu_k = (\psi_0,\ldots,\psi_{k-1},\psi_{k+1},\ldots, \psi_{K-1})$. We also define $\vec w^*=\vec I_{\vec\nu_k\vec\nu_k}^{*-1}\vec I_{\psi_k\vec\nu_k}^*$. First we state the assumptions required for the validity of the target inference. Assumptions \ref{asmp5}-\ref{asmp8} are similar to the Assumptions 3.1-3.4 in \cite{ning2017general} required to establish the asymptotic normality of the one-step improved estimator.

\begin{assumption}[Consistency conditions for initial penalized G-estimator]\label{asmp5}
 For some sequences $\eta_1(n)$ and $\eta_2(n)$ converging to 0 as $n \rightarrow \infty$ the following holds
\begin{align*}
\underset{n \rightarrow \infty}{\text{lim}} \mathbb{P}_{\vec\psi^*}\big(\big|\big|\widehat{\vec\psi} - \vec\psi^*\big|\big|_1 \lesssim \eta_1(n)\big) =1 \text{\hspace{0.5cm}and\hspace{0.5cm}}\underset{n \rightarrow \infty}{\text{lim}} \mathbb{P}_{\vec\psi^*}\big(\big|\big|\widehat{\vec w} - \vec w^*\big|\big|_1 \lesssim \eta_1(n)\big) = 1,
\end{align*}
where $||\cdot||_1$ denotes the $L_1$ norm of a vector.   
\end{assumption}

\begin{assumption}[Concentration of the gradient and Hessian]\label{asmp6}
 We assume $||\vec S_{\vec\psi^*}||_\infty = O_{\mathbb{P}}(\sqrt{\log K}/n)$ and
\begin{align*}
    \Big|\Big| (1, -\vec w^{*\top})\vec S_{\vec\psi^*}\vec S_{\vec\psi^*}^\top - \mathbb{E}_{\vec\psi^*}\big[(1, -\vec w^{*\top})\vec S_{\vec\psi^*}\vec S_{\vec\psi^*}^\top\big]\Big|\Big|_\infty = O_{\mathbb{P}}(\sqrt{\log K}/n).
\end{align*}   
\end{assumption}
This assumption imposes the sub-exponential conditions for some random variables related to the gradient and Hessian matrix.

\begin{assumption}[Local smoothness conditions]\label{asmp7}
 Let $\widehat{\vec\psi}_0 = (0, \widehat{\vec\nu}_k^\top)^\top$. We assume that for both $\widecheck{\vec\psi} = \widehat{\vec\psi}_0$ and $\widecheck{\vec\psi} = \widehat{\vec\psi}$ the following holds
\begin{align*}
    (1, -\vec w^{*\top})\{\vec S_{\widecheck{\vec\psi}} - \vec S_{\vec\psi^*} - \vec S_{\vec\psi^*}\vec S_{\vec\psi^*}^\top(\widecheck{\vec\psi} - \vec\psi^*)\} &= o_{\mathbb{P}}(n^{-1/2}), \text{ and}\\
    \{(1, -\widehat{\vec w}^\top) - (1, -\vec w^{*\top})\}(\vec S_{\widecheck{\vec\psi}} - \vec S_{\vec\psi^*}) &= o_{\mathbb{P}}(n^{-1/2}).
\end{align*}   
\end{assumption}

\begin{assumption}[Central limit theorem for the efficient score function]\label{asmp8}
  We assume it holds that
\begin{align*}
    \sqrt{n}(1, -\vec w^{*\top}) \vec S_{\vec\psi^*} /\sqrt{\sigma_S^*} \sim N(0,1), \text{ where }\sigma_S^* = (1, -\widehat{\vec w}^\top)
    \Big[\underset{n \rightarrow \infty}{\text{lim}}\text{Var}(n^{1/2}\vec S_{\vec\psi^*})\Big]
    (1, -\widehat{\vec w}^\top)^\top
\end{align*}
and $\sigma_S^* \geq C$ for some constant $C > 0$.  
\end{assumption}

We follow \cite{ning2017general} to state and prove the asymptotic normality of the decorrelated score function and the one-step improved penalized G-estimator.

\begin{theorem}[\bf Asymptotic normality of the decorrelated score function] \label{thm.asymp.normal.decorr.score}
We define the score test statistic for the hypothesis $H_0: \psi_k = 0$ as $\widehat{T}_n = n^{1/2}\widehat{\ddot{S}}(0,\widehat{\vec\nu}_k,\widehat{\vec\delta})/\sqrt{\widehat{\sigma}_S}$, where $\widehat{\sigma}_S$ is a consistent estimator of $\sigma_S^*$. Then under the regularity conditions C1-C6 and the Assumptions \ref{asmp1}-\ref{asmp8}, if $\{\eta_1(n) + \eta_2(n)\}\sqrt{\log K} = o(1)$, we have
    \begin{align}
        n^{1/2}\widehat{\ddot{S}}(0,\widehat{\vec\nu}_k,\widehat{\vec\delta})\sigma_S^{*-1/2} \sim N(0,1),        
    \end{align}
 and for any $t\in \mathbb{R}$,
 \begin{align}
    \underset{n \rightarrow \infty}{\text{lim}} |\mathbb{P}_{\psi^*}(\widehat{T}_n \leq t) - \Phi(t)| = 0,
 \end{align}
 where $\Phi$ denotes the cumulative distribution function of the standard normal distribution.
\end{theorem}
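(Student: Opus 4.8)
The plan is to show that the estimated decorrelated score, after centering and scaling, is asymptotically equivalent to the ``oracle'' linear statistic $n^{1/2}(1,-\vec w^{*\top})\vec S_{\vec\psi^*}$, whose limiting distribution is supplied directly by the central limit theorem in Assumption 8. Writing $\widehat{\vec\psi}_0 = (0,\widehat{\vec\nu}_k^\top)^\top$, I would first note that $\widehat{\ddot S}(0,\widehat{\vec\nu}_k,\widehat{\vec\delta}) = (1,-\widehat{\vec w}^\top)\vec S_{\widehat{\vec\psi}_0}$, where $\vec S_{\widehat{\vec\psi}_0}$ is the efficient score sub-vector evaluated at $\widehat{\vec\psi}_0$ with the treatment-free parameter fixed at $\widehat{\vec\delta}$. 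The whole problem reduces to controlling the difference from the oracle quantity, which I would split as
\begin{align*}
(1,-\widehat{\vec w}^\top)\vec S_{\widehat{\vec\psi}_0} - (1,-\vec w^{*\top})\vec S_{\vec\psi^*}
&= (1,-\vec w^{*\top})(\vec S_{\widehat{\vec\psi}_0}-\vec S_{\vec\psi^*}) \\
&\quad + \{(1,-\widehat{\vec w}^\top)-(1,-\vec w^{*\top})\}\vec S_{\vec\psi^*} \\
&\quad + \{(1,-\widehat{\vec w}^\top)-(1,-\vec w^{*\top})\}(\vec S_{\widehat{\vec\psi}_0}-\vec S_{\vec\psi^*}).
\end{align*}

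Next I would dispatch the three pieces. The third piece is $o_{\mathbb{P}}(n^{-1/2})$ immediately from the second display of Assumption 7 (local smoothness) applied with $\widecheck{\vec\psi}=\widehat{\vec\psi}_0$. For the second piece, since $(1,-\widehat{\vec w}^\top)-(1,-\vec w^{*\top})=(0,-(\widehat{\vec w}-\vec w^*)^\top)$, I would bound it by $\|\widehat{\vec w}-\vec w^*\|_1\,\|\vec S_{\vec\psi^*}\|_\infty$; the consistency rate of Assumption 5 for $\widehat{\vec w}$ combined with the sup-norm concentration rate of Assumption 6 for $\vec S_{\vec\psi^*}$, together with the rate condition $\{\eta_1(n)+\eta_2(n)\}\sqrt{\log K}=o(1)$, forces this product to be $o_{\mathbb{P}}(n^{-1/2})$. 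For the first piece, the first display of Assumption 7 replaces it by the linearization $(1,-\vec w^{*\top})\vec S_{\vec\psi^*}\vec S_{\vec\psi^*}^\top(\widehat{\vec\psi}_0-\vec\psi^*)$ up to $o_{\mathbb{P}}(n^{-1/2})$.

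The crux of the argument is showing that this surviving linearization term is itself negligible, and this is exactly where the decorrelation is used. Using the concentration bound of Assumption 6 I would replace the empirical Hessian surrogate $\vec S_{\vec\psi^*}\vec S_{\vec\psi^*}^\top$ by its expectation $\vec I^*$, at the cost of an error of order $\|\widehat{\vec\psi}_0-\vec\psi^*\|_1$ times the Assumption 6 rate, which is again $o_{\mathbb{P}}(n^{-1/2})$ under the rate condition. The leading term is then $(1,-\vec w^{*\top})\vec I^*(\widehat{\vec\psi}_0-\vec\psi^*)$, and the decorrelating choice $\vec w^{*\top}=\vec I_{\psi_k\vec\nu_k}^*\vec I_{\vec\nu_k\vec\nu_k}^{*-1}$ makes the $\vec\nu_k$-block of $(1,-\vec w^{*\top})\vec I^*$ equal to $\vec I_{\psi_k\vec\nu_k}^*-\vec w^{*\top}\vec I_{\vec\nu_k\vec\nu_k}^*=\vec 0$; under $H_0:\psi_k=0$ the $\psi_k$-coordinate of $\widehat{\vec\psi}_0-\vec\psi^*$ also vanishes, so the entire product is identically zero. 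Collecting the bounds yields $n^{1/2}\widehat{\ddot S}(0,\widehat{\vec\nu}_k,\widehat{\vec\delta})=n^{1/2}(1,-\vec w^{*\top})\vec S_{\vec\psi^*}+o_{\mathbb{P}}(1)$, and dividing by $\sqrt{\sigma_S^*}$ and invoking Assumption 8 gives the first conclusion.

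For the second conclusion I would appeal to Slutsky's theorem: since $\widehat{\sigma}_S$ is consistent for $\sigma_S^*$ and $\sigma_S^*\geq C>0$, replacing $\sigma_S^*$ by $\widehat{\sigma}_S$ in the denominator preserves the $N(0,1)$ limit, so $\widehat{T}_n$ converges in distribution to $N(0,1)$; the uniform convergence $\sup_t|\mathbb{P}_{\psi^*}(\widehat{T}_n\leq t)-\Phi(t)|\to 0$ then follows from Polya's theorem because the limiting CDF is continuous. I expect the main obstacle to be the orthogonal cancellation in the third paragraph---verifying that the first-order drift induced by plugging in the estimated nuisance direction $\widehat{\vec\nu}_k$ (and the treatment-free estimate $\widehat{\vec\delta}$, whose error must be absorbed by Assumptions 4 and 7) genuinely annihilates against $(1,-\vec w^{*\top})\vec I^*$. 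This cancellation is the entire purpose of the decorrelated construction and relies on the efficient-score information identity carrying over to the G-estimation setting.
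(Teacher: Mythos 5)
Your proposal is correct and follows essentially the same route as the paper's proof: decompose the difference between $\widehat{\ddot S}(0,\widehat{\vec\nu}_k,\widehat{\vec\delta})$ and the oracle statistic $(1,-\vec w^{*\top})\vec S_{\vec\psi^*}$, linearize via Assumption~7, control the remainders with the rates in Assumptions~5 and~6 under $\{\eta_1(n)+\eta_2(n)\}\sqrt{\log K}=o(1)$, and conclude with Assumption~8 and Slutsky. Your three-term split and the explicit verification that $(1,-\vec w^{*\top})\vec I^*(\widehat{\vec\psi}_0-\vec\psi^*)$ vanishes (via the decorrelation identity and $H_0$) merely make explicit what the paper compresses into its two-term bound on $I_1$, where the zero expectation of $S_{\psi_k^*}\vec S_{\vec\nu_k^*}^\top-\vec w^{*\top}\vec S_{\vec\nu_k^*}\vec S_{\vec\nu_k^*}^\top$ is used implicitly.
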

\begin{proof}
    In Assumption~\ref{asmp7}, we defined that $\widehat{\vec\psi}_0 = (0, \widehat{\vec\nu}_k^\top)^\top$. Let $\vec S_{\widehat{\vec\psi}_0} = \vec S_{\vec\psi}(\vec\theta)|_{\vec\psi=\widehat{\vec\psi}_0, \vec\delta=\widehat{\vec\delta}}$. Now by the definition of $\widehat{\ddot{S}}(\widehat{\vec\psi}_0,\widehat{\vec\delta}) = \widehat{\ddot{S}}(0,\widehat{\vec\nu}_k,\widehat{\vec\delta})$, we can do the following decomposition:
    \begin{align}
        n^{1/2}|
        &\widehat{\ddot{S}}(\widehat{\vec\psi}_0,\widehat{\vec\delta}) - \ddot{S}(\vec\psi^*,\vec\delta^*)|\nonumber\\
        &=  n^{1/2}|(1, -\widehat{\vec w}^\top)\vec S_{\widehat{\vec\psi}_0} - (1, -\vec w^{*\top})\vec S_{\vec\psi^*}|\nonumber\\
        &\leq n^{1/2}|(1, -\vec w^{*\top})(\vec S_{\widehat{\vec\psi}_0} - \vec S_{\vec\psi^*})| + n^{1/2}|\{(1, -\widehat{\vec w}^\top) - (1, -\vec w^{*\top})\}\vec S_{\widehat{\vec\psi}_0}|\nonumber\\
        &=I_1+I_2. \label{eq.decorr.score.decomp}
    \end{align}
    Applying Assumption~\ref{asmp7}, we can show that
    \begin{align*}
        |I_1| &\leq n^{1/2}|(1, -\vec w^{*\top})\vec S_{\vec\psi^*}\vec S_{\vec\psi^*}^\top(\widehat{\vec\psi}_0 - \vec\psi^*)| + o_{\mathbb{P}}(1)\\
        &\leq n^{1/2}||(\widehat{\vec\psi}_0 - \vec\psi^* )||_1 ||S_{\psi_k^*}\vec S_{\vec\nu_k^*}^\top - \vec w^{*\top}\vec S_{\vec\nu_k^*}\vec S_{\vec\nu_k^*}^\top||_\infty + o_{\mathbb{P}}(1).
    \end{align*}
    By Assumptions \ref{asmp5} and \ref{asmp6}, we have $|I_1| \lesssim \eta_1(n)\sqrt{\log K} + o_{\mathbb{P}}(1) = o_{\mathbb{P}}(1)$, and Assumption~\ref{asmp7} implies that
    \begin{align*}
        |I_2| &\leq n^{1/2}|\{(1, -\widehat{\vec w}^\top) - (1, -\vec w^{*\top})\}\vec S_{\vec\psi^*}| + o_{\mathbb{P}}(1)\\
        &\leq n^{1/2}||(1, -\widehat{\vec w}^\top) - (1, -\vec w^{*\top})||_1 ||\vec S_{\vec\psi^*}||_\infty + o_{\mathbb{P}}(1).
    \end{align*}
    By Assumptions \ref{asmp5} and \ref{asmp6}, we have $|I_2| \lesssim \eta_2(n)\sqrt{\log K} + o_{\mathbb{P}}(1) = o_{\mathbb{P}}(1)$. Together with (\ref{eq.decorr.score.decomp}), the bounds for $I_1$ and $I_2$ imply $n^{1/2}|\widehat{\ddot{S}}(\widehat{\vec\psi}_0,\widehat{\vec\delta}) - \widehat{\ddot{S}}(\vec\psi^*,\vec\delta^*)| = o_{\mathbb{P}}(1)$. By Assumption~\ref{asmp8}, we have $n^{1/2}\widehat{\ddot{S}}(\vec\psi^*,\vec\delta^*)\sigma_S^{*-1/2} \sim N(0,1)$. Since $\sigma_S^* \geq C$ in Assumption~\ref{asmp8}, we have that
    \begin{align*}
        n^{1/2}|\widehat{\ddot{S}}(0, \widehat{\vec\nu}_k, \widehat{\vec\delta})\sigma_S^{*-1/2} - \ddot{S}(0, \vec\nu_k^*,\vec\delta^*)\sigma_S^{*-1/2}| = o_{\mathbb{P}}(1).
    \end{align*}
    Then applying Slutsky's theorem we complete the proof.
\end{proof}

\begin{proof}[\bf Proof of Theorem~\ref{thm.asymp.normal.OS}]
 Our goal is to show that
 \begin{align}
     n^{1/2}\big|(\widetilde{\psi}_k - \psi_k^*)I_{\psi_k|\vec\nu_k}^*/\sigma_S^{*1/2} + (1, -\vec w^{*\top}) \vec S_{\vec\psi^*}/\sigma_S^{*1/2}\big| = o_{\mathbb{P}}(1).
 \end{align}
 By the definition of $\widetilde{\psi}_k$, we have the following decomposition:
  \begin{align*}
     n^{1/2}\big|
     &(\widetilde{\psi}_k - \psi_k^*) I_{\psi_k|\vec\nu_k}^* + (1, -\vec w^{*\top}) \vec S_{\vec\psi^*}\big|\\
     &= n^{1/2}\big|(\widehat{\psi}_k - \psi_k^*) I_{\psi_k|\vec\nu_k}^* - I_{\psi_k|\vec\nu_k}^* \widehat{I}_{\psi_k|\vec\nu_k}^{-1}(1, -\widehat{\vec w}^\top)\vec S_{\widehat{\vec\psi}} + (1, -\vec w^{*\top}) \vec S_{\vec\psi^*}\big|\\
     &\leq  n^{1/2}\big|(\widehat{\psi}_k - \psi_k^*) I_{\psi_k|\vec\nu_k}^* - (1, -\vec w^{*\top})(\vec S_{\widehat{\vec\psi}} - \vec S_{\vec\psi^*})\big|\\
     &+n^{1/2}\big| \{(1, -\widehat{\vec w}^\top) - (1, -\vec w^{*\top})\}\vec S_{\widehat{\vec\psi}}\big| + n^{1/2}\big| (I_{\psi_k|\vec\nu_k}^* \widehat{I}_{\psi_k|\vec\nu_k}^{-1} - 1)(1, -\widehat{\vec w}^\top)\vec S_{\widehat{\vec\psi}}\big|\\
     &=I_1 + I_2 + I_3.
 \end{align*}
 The proof of Theorem~\ref{thm.asymp.normal.decorr.score} implies that $n^{1/2}(1, -\widehat{\vec w}^\top)\vec S_{\widehat{\vec\psi}}/\sigma_S^{*1/2} = O_{\mathbb{P}}(1).$ Thus, by the consistency of $\widehat{I}_{\psi_k|\vec\nu_k}$, we have $I_3/\sigma_S^{*1/2} = o_{\mathbb{P}}(1)$. Similar to the proof of Theorem~\ref{thm.asymp.normal.decorr.score}, we can show that
 \begin{align*}
     |I_2| \lesssim \eta_2(n)\sqrt{\log K} + o_{\mathbb{P}}(1) = o_{\mathbb{P}}(1).
 \end{align*}
 Let $S_{\psi_k^*} = \vec S_{\psi_k}(\vec\theta^*)$ and $\vec S_{\vec\nu_k^*} = \vec S_{\vec\nu_k}(\vec\theta^*)$ be the sub-vectors of $S(\vec\theta^*)$ corresponding to $\psi_k$ and $\vec\nu_k$, respectively. Next, applying the smoothness condition in Assumption~\ref{asmp7} we can show that
 \begin{align*}
     |I_1|
     &\leq n^{1/2}|(\widehat{\psi}_k - \psi_k^*) I_{\psi_k|\vec\nu_k}^* - (1, -\vec w^{*\top})\vec S_{\vec\psi^*}\vec S_{\vec\psi^*}^\top(\widehat{\vec\psi} - \vec\psi^*)| + o_{\mathbb{P}}(1)\\
     &\leq n^{1/2}|(\widehat{\psi}_k - \psi_k^*) I_{\psi_k|\vec\nu_k}^* - (\widehat{\psi}_k - \psi_k^*)( S_{\psi_k^*} S_{\psi_k^*}^\top - \vec w^{*\top}\vec S_{\vec\nu_k^*} S_{\psi_k^*}^\top)| \\
     &\hspace{0.5cm}+ n^{1/2}|(\widehat{\vec\nu}_k - \vec\nu_k^*)(S_{\psi_k^*}\vec S_{\vec\nu_k^*}^\top - \vec w^{*\top}\vec S_{\vec\nu_k^*}\vec S_{\vec\nu_k^*}^\top)| + o_{\mathbb{P}}(1)\\
     &\lesssim n^{1/2}||\widehat{\vec\psi} - \vec\psi^*||_1 ||\vec X||_\infty + o_{\mathbb{P}}(1),
 \end{align*}
 where $\vec X = [I_{\psi_k|\vec\nu_k}^* - (S_{\psi_k^*} S_{\psi_k^*}^\top - \vec w^{*\top}\vec S_{\vec\nu_k^*} S_{\psi_k^*}^\top), S_{\psi_k^*}\vec S_{\vec\nu_k^*}^\top - \vec w^{*\top}\vec S_{\vec\nu_k^*}\vec S_{\vec\nu_k^*}^\top]$ is a $K$-dimensional vector. Since by Assumption~\ref{asmp6}, $||\vec X||_\infty \lesssim \sqrt{\log K/n}$, so
 \begin{align*}
     |I_1| \lesssim \eta_1(n)\sqrt{\log K} + o_{\mathbb{P}}(1) = o_{\mathbb{P}}(1).
 \end{align*}
\end{proof}
This completes the proof.

\section{Appendix: Additional numerical results}\label{A.add.sim.res}
\setcounter{table}{0}
\setcounter{theorem}{0}
\renewcommand{\thetable}{B\arabic{table}}
\renewcommand\thetheorem{\text{B}\arabic{theorem}}

\subsection{Supplementary results for the primary simulations}
The model selection performance of the initial penalized G-estimator for all simulation scenarios considered in the manuscript are given in Table~\ref{tab.selection}.

\begin{table}[ht]
\centering
\caption{Model selection performance of the initial penalized G-estimator.}
\begin{tabular}{rrrcccc}
  \hline
 &  & corstr & FN & FP & EXACT & AFP \\ 
  \hline
$K=20$ & $n=500$ & Indep & 12.67 & 1.33 & 86.00 & 1.33 \\ 
   &  & Exch & 12.67 & 1.33 & 86.00 & 1.33 \\ 
   &  & UN & 10.67 & 0.67 & 88.67 & 0.67 \\ 
   & $n=800$ & Indep & 2.67 & 2.67 & 94.67 & 2.67 \\ 
   &  & Exch & 2.67 & 2.00 & 95.33 & 2.00 \\ 
   &  & UN & 2.67 & 2.00 & 95.33 & 2.00 \\ 
   & $n=1200$ & Indep & 0.00 & 0.67 & 99.33 & 0.67 \\ 
   &  & Exch & 0.00 & 2.00 & 98.00 & 2.00 \\ 
   &  & UN & 0.00 & 1.33 & 98.67 & 1.33 \\ 
  $K=50$ & $n=500$ & Indep & 22.00 & 0.00 & 78.00 & 0.00 \\ 
   &  & Exch & 18.00 & 0.00 & 82.00 & 0.00 \\ 
   &  & UN & 16.00 & 0.00 & 84.00 & 0.00 \\ 
   & $n=800$ & Indep & 0.67 & 0.00 & 99.33 & 0.00 \\ 
   &  & Exch & 1.33 & 0.67 & 98.00 & 0.67 \\ 
   &  & UN & 1.33 & 0.67 & 98.00 & 0.67 \\ 
   & $n=1200$ & Indep & 0.00 & 2.67 & 97.33 & 2.67 \\ 
   &  & Exch & 0.67 & 3.33 & 96.00 & 3.33 \\ 
   &  & UN & 0.67 & 4.00 & 95.33 & 4.00 \\ 
  $K=100$ & $n=500$ & Indep & 32.00 & 0.00 & 68.00 & 0.00 \\ 
   &  & Exch & 29.33 & 0.00 & 70.67 & 0.00 \\ 
   &  & UN & 30.67 & 0.00 & 69.33 & 0.00 \\ 
   & $n=800$ & Indep & 3.33 & 0.00 & 96.67 & 0.00 \\ 
   &  & Exch & 3.33 & 0.00 & 96.67 & 0.00 \\ 
   &  & UN & 3.33 & 0.00 & 96.67 & 0.00 \\ 
   & $n=1200$ & Indep & 0.00 & 0.00 & 100.00 & 0.00 \\ 
   &  & Exch & 0.00 & 0.67 & 99.33 & 0.67 \\ 
   &  & UN & 0.00 & 0.67 & 99.33 & 0.67 \\ 
   \hline
\end{tabular}\\ \label{tab.selection}
\footnotesize
FN: \% of false negatives, FP: \% of false positives, EXACT: \% of exact selections,\\
AFP: average false positives, Indep: independent, Exch: exchangeable, UN: unstructured
\end{table}

\subsection{Investigation on the G-null paradox}\label{A.gNULL}

The G-null paradox demonstrates that even when the global null hypothesis of no treatment effect is true, G-estimation can produce biased, non-zero effect estimates if the treatment-free outcome model is misspecified. This is true in case of doubly-robust estimation if the treatment-model is incorrect and the estimated propensity scores are biased. In this section, we investigate the performance of the penalized G-estimator and the proposed inferential methods considering a simulation setting which is designed to trigger the G-null paradox.

Data were generated assuming an exchangeable correlation structure with $\alpha = 0.8$, error variance $\sigma^2_\epsilon = 1$, and autocorrelation coefficient $\rho = 0.25$. The sample size was set to $n = 500$, with $J = 6$ visits for all subjects. All other data-generating steps followed those described in the manuscript, except for the specifications of the treatment-free model and the outcome model coefficients. The true treatment-free model, designed to induce the G-null paradox, is as follows:
\begin{align*}
    \mu_j(\vec h_j; \vec\delta)&=\delta_0+\delta_1 l^{(1)}+\delta_2 l^{(2)}
    +\sum_{m=3}^6 \delta_m l_j^{(m)}
    +\sum_{m=1}^{20} \delta_{6+m}\,x_j^{(m)}
    +\sum_{m=21}^{K-6} \delta_{6+m}\,x_j^{(m)}
    +\delta_{K+1} \exp(l^{(6)})
\end{align*}
The true blip function is $\gamma^*_j(a_j,\vec h_j; \vec\psi)=(\psi_0+\psi_1 l^{(1)}+\psi_2 l^{(2)}+\sum_{m=3}^6\psi_m l_j^{(m)}+\sum_{m=1}^{20} \psi_{6+m}\,x_j^{(m)}+\sum_{m=21}^{K-6} \psi_{6+m}\,x_j^{(m)})a_j$ with common parameters at each time point. The true outcome model coefficients are $\vec\delta=(\delta_0,\ldots,\delta_{K+1})^\top$ and $\vec\psi=(\psi_0,\ldots,\psi_{K})^\top$, where
\begin{align*}
   \vec\delta&=(1,1, 1.2, 1.2, -0.9, 0.8, -1, 1, \ldots, 1, 0, \ldots,0,1.1)^\top\\
   \vec\psi&=(1, 2.5, -2.5, -2.8, 2.6, 2.8, 0, 0, \ldots,0, 0, \ldots,0)^\top.
\end{align*}
Note that $l^{(6)}$ is a true confounder but not an effect modifier, since $\psi_6 = 0$ in the blip function. This setup was designed so that, when both the treatment-free model and the treatment model are misspecified, the penalized G-estimation yields a biased (non-zero) estimate for $\psi_6$. This allows us to assess whether the proposed inferential methods can avoid the G-null paradox—that is, whether they yield valid inferences when at least one of the treatment or treatment-free models is correctly specified.

We evaluated the estimation performance under four scenarios: Scenario 1 (TcTFw) – the treatment model is correctly specified, while the treatment-free model is misspecified; Scenario 2 (TwTFc) – the treatment model is misspecified, while the treatment-free model is correctly specified; Scenario 3 (TcTFc) – both models are correctly specified; and Scenario 4 (TwTFw) – both models are misspecified. For each generated dataset, the proposed estimation was performed under each of the four scenarios and three working correlation structures. In Scenario 1, the propensity score was estimated using the true exposure model, but the $\exp(L^{(6)})$ predictor was excluded from the treatment-free model. In Scenario 2, $\exp(L^{(6)})$ was included in the treatment-free model, but confounder $L^{(6)}$ was excluded from the exposure model used for propensity score estimation. In Scenario 3, both the true exposure model and the correctly specified treatment-free model (including $\exp(L^{(6)})$) were used. Finally, in Scenario 4, confounder $L^{(6)}$ was excluded from the exposure model, and $\exp(L^{(6)})$ was excluded from the treatment-free component. In penalized G-estimation, the hyperparameter in the derivative function of the SCAD penalty was set to 2. We report, in Table~\ref{tab_gNull_selection}, the percentage of times each candidate effect modifier was selected, along with the false negative, false positive, and exact selection rates. Tables \ref{tab_gNull_fcr_coef} and \ref{tab_gNull_al_coef} summarize the performance of the inferential methods with respect to coefficientwise false coverage proportions and average confidence interval lengths, respectively. Since UPoSI method generally provides overly wide confidence intervals, it was excluded from the comparison.

\begin{table}[ht]
\centering
\caption{Effect modifier selection performance of the penalized G-estimator under different correlation structures (Ind: Independent, Exch: Exchangeable, UN: Unstructured) and estimation scenarios, based on 100 simulations with $K=100$, $n=500$, $\rho = 0.25$, $\sigma_{\epsilon}^2 = 1$, and $\alpha = 0.8$ (true correlation structure: exchangeable).}
\resizebox{1\textwidth}{!}{
\begin{tabular}{lrrrlrrrlrrrlrrr}
  \toprule
  &\multicolumn{3}{c}{\bf Scenario 1 (TcTFw)}&&\multicolumn{3}{c}{\bf Scenario 2 (TwTFc)}&&\multicolumn{3}{c}{\bf Scenario 3 (TcTFc)}&&\multicolumn{3}{c}{\bf Scenario 4 (TwTFw)}\\
  \cline{2-4}\cline{6-8}\cline{10-12}\cline{14-16}
 &\bf Indep &\bf Exch &\bf UN & &\bf Indep &\bf Exch &\bf UN & &\bf Indep &\bf Exch &\bf UN & &\bf Indep &\bf Exch &\bf UN\\ 
  \hline
$A\times L^{(1)}$ & 100 & 100 & 100 &  & 100 & 100 & 100 &  & 100 & 100 & 100 &  & 100 & 100 & 100 \\ 
$A\times L^{(2)}$  & 100 & 100 & 99 &  & 100 & 100 & 100 &  & 100 & 100 & 100 &  & 100 & 100 & 98 \\ 
$A\times L^{(3)}$  & 100 & 100 & 98 &  & 100 & 100 & 100 &  & 100 & 100 & 100 &  & 100 & 100 & 99 \\ 
$A\times L^{(4)}$  & 100 & 100 & 99 &  & 100 & 100 & 100 &  & 100 & 100 & 100 &  & 100 & 100 & 100 \\ 
$A\times L^{(5)}$  & 100 & 100 & 98 &  & 100 & 100 & 100 &  & 100 & 100 & 100 &  & 100 & 100 & 98 \\ 
$A\times L^{(6)}$  & 4 & 4 & 4 &  & 0 & 0 & 0 &  & 0 & 0 & 0 &  & 100 & 100 & 100 \\ 
$A\times \exp(L^{(6)})$  & - & - & - &  & 0 & 0 & 0 &  & 0 & 0 & 0 &  & - & - & - \\ 
$A\times X^{(1)}$ & 0 & 0 & 0 &  & 0 & 0 & 0 &  & 0 & 0 & 0 &  & 0 & 0 & 0 \\ 
$A\times X^{(2)}$ & 0 & 0 & 0 &  & 0 & 0 & 0 &  & 0 & 0 & 0 &  & 0 & 0 & 0 \\ 
$A\times X^{(3)}$ & 0 & 0 & 0 &  & 0 & 0 & 0 &  & 0 & 0 & 0 &  & 0 & 0 & 0 \\ 
$A\times X^{(4)}$ & 0 & 0 & 0 &  & 0 & 0 & 0 &  & 0 & 0 & 0 &  & 0 & 0 & 0 \\ 
$A\times X^{(5)}$ & 0 & 0 & 0 &  & 0 & 0 & 0 &  & 0 & 0 & 0 &  & 0 & 0 & 0 \\ 
$A\times X^{(6)}$ & 0 & 0 & 0 &  & 0 & 0 & 0 &  & 0 & 0 & 0 &  & 0 & 0 & 0 \\ 
$A\times X^{(7)}$ & 0 & 0 & 0 &  & 0 & 0 & 0 &  & 0 & 0 & 0 &  & 0 & 0 & 0 \\ 
$A\times X^{(8)}$ & 0 & 0 & 0 &  & 0 & 0 & 0 &  & 0 & 0 & 0 &  & 0 & 0 & 0 \\ 
$A\times X^{(9)}$ & 0 & 0 & 0 &  & 0 & 0 & 0 &  & 0 & 0 & 0 &  & 0 & 0 & 0 \\ 
$A\times X^{(10)}$ & - & - & - &  & 0 & 0 & 0 &  & 0 & 0 & 0 &  & - & - & -  \\ 
\vdots&\vdots&\vdots&\vdots&&\vdots&\vdots&\vdots&&\vdots&\vdots&\vdots&&\vdots&\vdots&\vdots\\
$A\times X^{(94)}$ & 0 & 0 & 0 &  & 0 & 0 & 0 &  & 0 & 0 & 0 &  & 0 & 0 & 0 \\ 
$A\times X^{(95)}$ & 0 & 0 & 0 &  & 0 & 0 & 0 &  & 0 & 0 & 0 &  & 0 & 0 & 0 \\ 
  FN & 0 & 0 & 2 &  & 0 & 0 & 0 &  & 0 & 0 & 0 &  & 0 & 0 & 2 \\ 
  FP & 4 & 4 & 4 &  & 0 & 0 & 0 &  & 2 & 0 & 0 &  & 100 & 100 & 100 \\ 
  EXACT & 96 & 96 & 94 &  & 100 & 100 & 100 &  & 98 & 100 & 100 &  & 0 & 0 & 0 \\ 
   \bottomrule
\end{tabular}
}\label{tab_gNull_selection}
\vspace{0.8cm}
\footnotesize
FN: \% of false negatives, FP: \% of false positives, EXACT: \% of exact selections
\end{table}

Table~\ref{tab_gNull_selection} indicates that when both models were misspecified (Scenario 4), the penalized G-estimation incorrectly selected $L^{(6)}$ as an effect modifier in 100\% of the simulations, leading to false positives. In contrast, when at least one of the models was correctly specified (Scenarios 1–3), exact selection occurred in over 90\% of the simulations. Regarding the inferential methods, the naive Wald-type approach (based on the sandwich variance estimator) produced overly narrow confidence intervals (see Table~\ref{tab_gNull_al_coef}) and exhibited false coverage proportions exceeding the nominal 0.05 level (see Table~\ref{tab_gNull_fcr_coef}) in most of the cases. The decorrelated score approach based on sparse weights obtained via the Dantzig selector maintained control of false coverage rates across all correlation structures in Scenarios 1-3. Under Scenario 1, the variable $L^{(6)}$ was incorrectly selected as an effect modifier in 4\% of the simulations; in those cases, Dantzig-based inference attained adequate coverage for the truly null coefficient $\psi_6$, whereas the naive confidence intervals rarely included the null value. When both models were misspecified (Scenario~4), however, the decorrelated score approach exhibited false coverage proportions exceeding 5\% for main effect $\psi_0$ and the true null coefficient $\psi_6$. Hence, we expect that accurate specification of at least one model will reduce the risk of structural model misspecification and will help approximate the true data-generating processes more closely, thereby mitigating the impact of the G-null paradox in practice.

\newcolumntype{L}[1]{>{\hspace{#1}}l}

\begin{table}[ht]
\centering
\caption{Coefficient-wise false coverage results for the inferential methods under different correlation structures (Ind: Independent, Exch: Exchangeable, UN: Unstructured) and estimation scenarios, based on 100 simulations with $K=100$, $n=500$, $\rho = 0.25$, $\sigma_{\epsilon}^2 = 1$, and $\alpha = 0.8$ (true correlation structure: exchangeable).}
\resizebox{1\textwidth}{!}{
\begin{tabular}{L{1em}rrrlrrrlrrrlrrr}
  \toprule
  &\multicolumn{3}{c}{\bf Scenario 1 (TcTFw)}&&\multicolumn{3}{c}{\bf Scenario 2 (TwTFc)}&&\multicolumn{3}{c}{\bf Scenario 3 (TcTFc)}&&\multicolumn{3}{c}{\bf Scenario 4 (TwTFw)}\\
  \cline{2-4}\cline{6-8}\cline{10-12}\cline{14-16}
 &\bf Indep &\bf Exch &\bf UN & &\bf Indep &\bf Exch &\bf UN & &\bf Indep &\bf Exch &\bf UN & &\bf Indep &\bf Exch &\bf UN\\
\hline
\hspace{-1em}$\psi_0 = 1$\\
Naive & 0.07 & 0.05 & 0.07 &  & 0.08 & 0.10 & 0.13 &  & 0.09 & 0.07 & 0.07 &  & 0.64 & 0.85 & 0.84 \\ 
  OS.Full & 0.00 & 0.01 & 0.04 &  & 0.09 & 0.07 & 0.09 &  & 0.10 & 0.08 & 0.08 &  & 0.22 & 0.20 & 0.22 \\ 
  OS.LASSO & 0.01 & 0.00 & 0.03 &  & 0.01 & 0.02 & 0.03 &  & 0.04 & 0.02 & 0.02 &  & 0.09 & 0.07 & 0.09 \\ 
  OS.Dantzig & 0.00 & 0.00 & 0.02 &  & 0.00 & 0.00 & 0.00 &  & 0.00 & 0.00 & 0.00 &  & 0.08 & 0.04 & 0.07\vspace{0.2cm}\\
  \hspace{-1em}$\psi_1 = 2.5$\\
  Naive & 0.12 & 0.12 & 0.09 &  & 0.13 & 0.09 & 0.07 &  & 0.12 & 0.03 & 0.05 &  & 0.11 & 0.15 & 0.13 \\ 
  OS.Full & 0.06 & 0.08 & 0.08 &  & 0.18 & 0.05 & 0.08 &  & 0.10 & 0.06 & 0.06 &  & 0.11 & 0.13 & 0.12 \\ 
  OS.LASSO & 0.01 & 0.02 & 0.01 &  & 0.06 & 0.03 & 0.01 &  & 0.02 & 0.03 & 0.03 &  & 0.07 & 0.04 & 0.04 \\ 
  OS.Dantzig & 0.00 & 0.00 & 0.00 &  & 0.03 & 0.00 & 0.00 &  & 0.00 & 0.00 & 0.00 &  & 0.01 & 0.01 & 0.01\vspace{0.2cm}\\
  \hspace{-1em}$\psi_2 = -2.5$\\
  Naive & 0.04 & 0.05 & 0.07 &  & 0.06 & 0.05 & 0.03 &  & 0.04 & 0.04 & 0.04 &  & 0.08 & 0.10 & 0.10 \\ 
  OS.Full & 0.06 & 0.04 & 0.08 &  & 0.06 & 0.04 & 0.08 &  & 0.02 & 0.04 & 0.07 &  & 0.09 & 0.13 & 0.11 \\ 
  OS.LASSO & 0.02 & 0.01 & 0.03 &  & 0.01 & 0.01 & 0.02 &  & 0.01 & 0.01 & 0.01 &  & 0.07 & 0.05 & 0.05 \\ 
  OS.Dantzig & 0.00 & 0.00 & 0.01 &  & 0.00 & 0.00 & 0.00 &  & 0.00 & 0.00 & 0.00 &  & 0.00 & 0.00 & 0.00\vspace{0.2cm}\\
  \hspace{-1em}$\psi_3 = -2.8$\\
  Naive & 0.15 & 0.10 & 0.10 &  & 0.11 & 0.12 & 0.11 &  & 0.11 & 0.11 & 0.13 &  & 0.14 & 0.14 & 0.15 \\ 
  OS.Full & 0.07 & 0.09 & 0.11 &  & 0.07 & 0.07 & 0.09 &  & 0.12 & 0.08 & 0.10 &  & 0.24 & 0.20 & 0.19 \\ 
  OS.LASSO & 0.00 & 0.02 & 0.06 &  & 0.04 & 0.01 & 0.05 &  & 0.01 & 0.01 & 0.03 &  & 0.08 & 0.14 & 0.14 \\ 
  OS.Dantzig & 0.00 & 0.00 & 0.00 &  & 0.00 & 0.01 & 0.01 &  & 0.00 & 0.00 & 0.00 &  & 0.00 & 0.00 & 0.01\vspace{0.2cm}\\
  \hspace{-1em}$\psi_4 = 2.6$\\
  Naive & 0.09 & 0.09 & 0.07 &  & 0.09 & 0.14 & 0.11 &  & 0.09 & 0.09 & 0.05 &  & 0.11 & 0.12 & 0.12 \\ 
  OS.Full & 0.09 & 0.09 & 0.11 &  & 0.11 & 0.07 & 0.09 &  & 0.11 & 0.07 & 0.09 &  & 0.09 & 0.09 & 0.14 \\ 
  OS.LASSO & 0.02 & 0.04 & 0.04 &  & 0.05 & 0.00 & 0.02 &  & 0.01 & 0.03 & 0.04 &  & 0.06 & 0.04 & 0.10 \\ 
  OS.Dantzig & 0.00 & 0.00 & 0.01 &  & 0.01 & 0.00 & 0.00 &  & 0.01 & 0.00 & 0.00 &  & 0.00 & 0.00 & 0.01\vspace{0.2cm}\\
  \hspace{-1em}$\psi_5 = 2.8$\\
  Naive & 0.06 & 0.06 & 0.05 &  & 0.06 & 0.09 & 0.09 &  & 0.05 & 0.08 & 0.07 &  & 0.06 & 0.06 & 0.03 \\ 
  OS.Full & 0.13 & 0.20 & 0.15 &  & 0.11 & 0.06 & 0.06 &  & 0.07 & 0.08 & 0.09 &  & 0.12 & 0.11 & 0.15 \\ 
  OS.LASSO & 0.09 & 0.08 & 0.09 &  & 0.03 & 0.01 & 0.02 &  & 0.01 & 0.05 & 0.02 &  & 0.07 & 0.10 & 0.07 \\ 
  OS.Dantzig & 0.00 & 0.00 & 0.00 &  & 0.00 & 0.00 & 0.00 &  & 0.00 & 0.00 & 0.00 &  & 0.00 & 0.00 & 0.00\vspace{0.2cm}\\
  \hspace{-1em}$\psi_6 = 0$\\
  Naive & 0.75 & 1.00 & 0.75 &  & - & - & - &  & - & - & - &  & 1.00 & 1.00 & 1.00 \\ 
  OS.Full & 0.00 & 0.25 & 0.50 &  & - & - & - &  & - & - & - &  & 1.00 & 1.00 & 1.00 \\ 
  OS.LASSO & 0.00 & 0.00 & 0.50 &  & - & - & - &  & - & - & - &  & 1.00 & 1.00 & 1.00 \\ 
  OS.Dantzig & 0.00 & 0.00 & 0.00 &  & - & - & - &  & - & - & - &  & 1.00 & 1.00 & 1.00\\ 
\bottomrule
\end{tabular}
}\label{tab_gNull_fcr_coef}
\end{table}

\begin{table}[ht]
\centering
\caption{Coefficient-wise average CI length for the inferential methods under different correlation structures (Ind: Independent, Exch: Exchangeable, UN: Unstructured) and estimation scenarios, based on 100 simulations with $K=100$, $n=500$, $\rho = 0.25$, $\sigma_{\epsilon}^2 = 1$, and $\alpha = 0.8$ (true correlation structure: exchangeable).}
\resizebox{1\textwidth}{!}{
\begin{tabular}{L{1em}rrrlrrrlrrrlrrr}
  \toprule
  &\multicolumn{3}{c}{\bf Scenario 1 (TcTFw)}&&\multicolumn{3}{c}{\bf Scenario 2 (TwTFc)}&&\multicolumn{3}{c}{\bf Scenario 3 (TcTFc)}&&\multicolumn{3}{c}{\bf Scenario 4 (TwTFw)}\\
  \cline{2-4}\cline{6-8}\cline{10-12}\cline{14-16}
 &\bf Indep &\bf Exch &\bf UN & &\bf Indep &\bf Exch &\bf UN & &\bf Indep &\bf Exch &\bf UN & &\bf Indep &\bf Exch &\bf UN\\
\hline
\hspace{-1em}$\psi_0 = 1$\\
Naive & 0.41 & 0.39 & 0.41 &  & 0.19 & 0.09 & 0.09 &  & 0.20 & 0.10 & 0.10 &  & 0.64 & 0.45 & 0.46 \\ 
  OS.Full & 1.04 & 1.04 & 0.99 &  & 0.33 & 0.16 & 0.15 &  & 0.44 & 0.21 & 0.21 &  & 0.81 & 0.83 & 0.79 \\ 
  OS.LASSO & 1.39 & 1.36 & 1.31 &  & 0.44 & 0.19 & 0.19 &  & 0.63 & 0.26 & 0.26 &  & 1.05 & 1.05 & 1.02 \\ 
  OS.Dantzig & 1.52 & 1.55 & 1.49 &  & 0.61 & 0.26 & 0.25 &  & 0.87 & 0.40 & 0.39 &  & 1.25 & 1.26 & 1.21\vspace{0.2cm}\\
  \hspace{-1em}$\psi_1 = 2.5$\\
 Naive & 0.49 & 0.47 & 0.49 &  & 0.21 & 0.10 & 0.10 &  & 0.22 & 0.11 & 0.11 &  & 0.50 & 0.48 & 0.50 \\ 
  OS.Full & 1.18 & 1.19 & 1.13 &  & 0.27 & 0.14 & 0.13 &  & 0.33 & 0.16 & 0.15 &  & 0.88 & 0.90 & 0.87 \\ 
  OS.LASSO & 1.61 & 1.59 & 1.62 &  & 0.34 & 0.17 & 0.16 &  & 0.44 & 0.20 & 0.19 &  & 1.19 & 1.24 & 1.20 \\ 
  OS.Dantzig & 1.84 & 1.90 & 1.84 &  & 0.39 & 0.19 & 0.19 &  & 0.50 & 0.23 & 0.22 &  & 1.45 & 1.51 & 1.45\vspace{0.2cm}\\
  \hspace{-1em}$\psi_2 = -2.5$\\
  Naive & 0.51 & 0.49 & 0.49 &  & 0.21 & 0.10 & 0.10 &  & 0.22 & 0.11 & 0.11 &  & 0.54 & 0.51 & 0.50 \\ 
  OS.Full & 1.22 & 1.22 & 1.17 &  & 0.27 & 0.14 & 0.14 &  & 0.34 & 0.16 & 0.16 &  & 0.89 & 0.92 & 0.90 \\ 
  OS.LASSO & 1.67 & 1.74 & 1.69 &  & 0.35 & 0.17 & 0.17 &  & 0.45 & 0.20 & 0.20 &  & 1.18 & 1.27 & 1.22 \\ 
  OS.Dantzig & 1.98 & 2.04 & 1.96 &  & 0.40 & 0.20 & 0.20 &  & 0.52 & 0.24 & 0.23 &  & 1.51 & 1.60 & 1.52
  \vspace{0.2cm}\\
  \hspace{-1em}$\psi_3 = -2.8$\\
 Naive & 0.53 & 0.50 & 0.50 &  & 0.20 & 0.10 & 0.10 &  & 0.22 & 0.11 & 0.11 &  & 0.55 & 0.53 & 0.54 \\ 
  OS.Full & 1.22 & 1.23 & 1.18 &  & 0.28 & 0.14 & 0.14 &  & 0.34 & 0.17 & 0.16 &  & 0.91 & 0.93 & 0.89 \\ 
  OS.LASSO & 1.74 & 1.69 & 1.57 &  & 0.35 & 0.17 & 0.17 &  & 0.45 & 0.20 & 0.19 &  & 1.20 & 1.27 & 1.23 \\ 
  OS.Dantzig & 2.00 & 2.06 & 2.02 &  & 0.40 & 0.20 & 0.20 &  & 0.52 & 0.24 & 0.24 &  & 1.57 & 1.69 & 1.65\vspace{0.2cm}\\
  \hspace{-1em}$\psi_4 = 2.6$\\
  Naive & 0.47 & 0.44 & 0.46 &  & 0.19 & 0.10 & 0.10 &  & 0.21 & 0.11 & 0.11 &  & 0.49 & 0.47 & 0.48 \\ 
  OS.Full & 1.09 & 1.10 & 1.05 &  & 0.26 & 0.13 & 0.12 &  & 0.31 & 0.15 & 0.14 &  & 0.84 & 0.85 & 0.81 \\ 
  OS.LASSO & 1.46 & 1.49 & 1.46 &  & 0.32 & 0.15 & 0.14 &  & 0.40 & 0.17 & 0.17 &  & 1.09 & 1.15 & 1.12 \\ 
  OS.Dantzig & 1.67 & 1.69 & 1.68 &  & 0.38 & 0.18 & 0.17 &  & 0.46 & 0.21 & 0.20 &  & 1.38 & 1.47 & 1.38\vspace{0.2cm}\\
  \hspace{-1em}$\psi_5 = 2.8$\\
  Naive & 0.53 & 0.51 & 0.51 &  & 0.18 & 0.09 & 0.09 &  & 0.20 & 0.10 & 0.10 &  & 0.52 & 0.50 & 0.50 \\ 
  OS.Full & 1.16 & 1.17 & 1.12 &  & 0.26 & 0.13 & 0.13 &  & 0.33 & 0.16 & 0.15 &  & 0.84 & 0.87 & 0.83 \\ 
  OS.LASSO & 1.62 & 1.62 & 1.59 &  & 0.33 & 0.16 & 0.15 &  & 0.44 & 0.19 & 0.19 &  & 1.08 & 1.17 & 1.17 \\ 
  OS.Dantzig & 2.15 & 2.32 & 2.17 &  & 0.38 & 0.19 & 0.19 &  & 0.54 & 0.24 & 0.24 &  & 1.65 & 1.80 & 1.75\vspace{0.2cm}\\
  \hspace{-1em}$\psi_6 = 0$\\
   Naive & 0.71 & 0.70 & 0.72 &  & - & - & - &  & - & - & - &  & 1.07 & 0.99 & 0.98 \\ 
  OS.Full & 1.28 & 1.29 & 1.23 &  & - & - & - &  & - & - & - &  & 0.56 & 0.59 & 0.57 \\ 
  OS.LASSO & 1.53 & 1.73 & 1.66 &  & - & - & - &  & - & - & - &  & 0.72 & 0.79 & 0.76 \\ 
  OS.Dantzig & 2.69 & 3.17 & 2.85 &  & - & - & - &  & - & - & - &  &1.43 & 1.59 & 1.47\\
\bottomrule
\end{tabular}
}\label{tab_gNull_al_coef}
\end{table}
\end{document}